\newtheorem{definition}{Definition}
\newtheorem{property}{Property}
\newtheorem{proposition}{Proposition}
\newenvironment{proof}[1][Proof]{\noindent\textbf{#1.} }{\ \rule{0.5em}{0.5em}}
\def\@biblabel#1{\hspace*{-\labelsep}}
\@citea\NAT@hyper@{%
     \NAT@nmfmt{\NAT@nm}%
     \hyper@natlinkbreak{\NAT@aysep\NAT@spacechar}{\@citeb\@extra@b@citeb}%
     \NAT@date}}
\@citea\NAT@nmfmt{\NAT@nm}%
\NAT@spacechar\NAT@hyper@{\NAT@date}}{}{}
\@citea\NAT@hyper@{%
     \NAT@nmfmt{\NAT@nm}%
     \hyper@natlinkbreak{\NAT@spacechar\NAT@@open\if*#1*\else#1\NAT@spacechar\fi}%
       {\@citeb\@extra@b@citeb}%
     \NAT@date}}
\@citea\NAT@nmfmt{\NAT@nm}%
\fi\NAT@hyper@{\NAT@date}}
\begin{document}
\title{Laplacian Eigenvector Centrality%
\thanks{
The authors thank Masaki Aoyagi, Francis Bloch, In{\'a}cio B{\'o}, Makoto Hanazono, Michihiro Kandori, and Yu Zhou for their helpful comments on an earlier version. Tamura gratefully acknowledges financial support from JSPS Grant-in-Aid for Scientific Research No. 24K04780.}}
\author{Koya Shimono%
\thanks{Graduate School of Economics, Nagoya University. E-mail: shimono.koya.x0@s.mail.nagoya-u.ac.jp}%
 \and Wataru Tamura%
\thanks{Graduate School of Economics, Nagoya University. E-mail: wtr.tamura@gmail.com.
}%
}
\maketitle
\sloppy%

\onehalfspacing
\begin{abstract}\noindent
Networks significantly influence social, economic, and organizational outcomes, with centrality measures serving as crucial tools to capture the importance of individual nodes. This paper introduces Laplacian Eigenvector Centrality (LEC), a novel framework for network analysis based on spectral graph theory and the eigendecomposition of the Laplacian matrix.
A distinctive feature of LEC is its adjustable parameter, the LEC order, which enables researchers to control and assess the scope of centrality measurement using the Laplacian spectrum.
Using random graph models, LEC demonstrates robustness and scalability across diverse network structures. We connect LEC to equilibrium responses to external shocks in an economic model, showing how LEC quantifies agents' roles in attenuating shocks and facilitating coordinated responses through quadratic optimization.
Finally, we apply LEC to the study of microfinance diffusion, illustrating how it complements classical centrality measures, such as eigenvector and Katz-Bonacich centralities, by capturing distinctive aspects of node positions within the network.
\vspace{4mm}
\\Keywords: social networks, centrality measures, Laplacian spectrum, eigenvectors, coordination, attenuation effects, information design, targeting interventions, information diffusion
\end{abstract}

\onehalfspacing
\clearpage
\section{Introduction}\label{sec:intro}

Networks play a pivotal role in shaping interactions and outcomes across social, economic, and organizational contexts. To understand the influence of network structure on individual behavior and collective dynamics, centrality measures, which quantify the importance of nodes within a network, have been widely studied and applied, reflecting diverse concepts such as influence, power, or prestige (\citealp{jackson2017economic}). However, existing measures are often constrained by specific assumptions or methodological rigidity, limiting their adaptability to the structural complexity of real-world networks. While the theoretical foundations of many measures are intuitive, interpreting and assessing computed centrality scores---and effectively integrating them into empirical or practical research---remain context-dependent and often opaque, particularly in large and structurally heterogeneous networks.

This paper introduces Laplacian Eigenvector Centrality (LEC), a spectral-based measure designed to address these challenges by offering a versatile and theoretically grounded framework for network analysis. By leveraging the Laplacian matrix and spectral graph theory, LEC provides a tractable and flexible approach to capturing the structural roles of nodes within a network.%
\footnote{\cite{garzon2017laplacian} propose a Laplacian Eigenvector Centrality specifically for power quality analysis in electrical systems. While their approach also employs the Laplacian matrix and eigenvector-based metrics, it differs significantly in scope and methodology. They focus on the Laplacian eigenvector associated with the largest eigenvalue to assess causality and disturbance flows in power grids, whereas our framework uses multiple eigenvectors with an adjustable parameter, \textit{LEC order}. Additionally, our work provides a general framework for network analysis with applications in social and economic networks.}

The core contributions of this paper are fourfold: First, we propose a novel method to control and assess the scope of centrality analysis based on the Laplacian spectrum, establishing LEC as an analytical framework rather than merely an alternative to existing measures. Second, we investigate the statistical properties of LEC, demonstrating its scalability and robustness across networks with diverse structural characteristics. Third, we provide a microfoundation for the proposed measure, incorporating a policy-design perspective and its characterization through convex optimization. Finally, we apply LEC in an empirical setting to study the diffusion of microfinance, demonstrating how LEC complements existing measures by uncovering new insights into diffusion dynamics and network bottlenecks. Through these contributions, LEC is presented as a practical and effective tool for both theoretical and applied network analysis.

Traditional centrality measures---such as degree, closeness, and eigenvector centralities---evaluate a node's influence by aggregating information about local connectivity or specific network pathways. For instance, degree centrality counts direct connections, closeness centrality tracks shortest paths, and eigenvector centrality aggregates neighbors' centrality scores (\citealp{bloch2023centrality}; \citealp{schoch2018centrality}). While these methods provide valuable perspectives, they are often limited in their ability to account for the global structural features within a network in a transparent manner.

A distinctive feature of LEC is its adjustable \textit{LEC order}, which determines the scope of centrality measurement according to specific analytical objectives. Lower LEC orders focus on a small group of core, highly connected nodes, while higher LEC orders expand to include more peripheral nodes. This flexibility distinguishes LEC from traditional centrality measures, which typically assess node positions within a fixed scope embedded in their definitions.

Unlike traditional centrality measures, which rely on the adjacency matrix to define and aggregate nodal statistics specific to each measure, LEC uses the Laplacian matrix, which has several useful spectral properties. Technically, LEC applies dimensionality reduction by decomposing the Laplacian matrix into subspaces spanned by its eigenvectors, with eigenvalues determining the contribution of each subspace to restore the original network structure. This approach allows LEC to provide a more comprehensive assessment of node importance, capturing structural features that extend beyond simple aggregation metrics.

The mathematical foundation of LEC also offers practical guidelines for selecting the LEC order. Drawing on insights from Laplacian eigendecomposition, we propose practices using eigenvalue decay rates and cumulative eigenvalue thresholds---analogous to principal component analysis---to balance structural focus and coverage. The LEC order determines the extent to which the network's structural composition is captured, with the cumulative eigenvalue quantifying the coverage relative to the original network. This ensures that LEC-based centrality analysis remains both transparent and adaptable to diverse applications.

In Section \ref{sec:define_lec}, we establish the theoretical basis of Laplacian Eigenvector Centrality (LEC), providing key insights into its scoring interpretation, mathematical properties, and practical implementation. Formally, LEC assigns centrality scores based on the sum of squared components of the Laplacian eigenvectors. This formulation enables LEC to distribute scores according to the structural importance of nodes, as determined by the selected LEC order.

We demonstrate several fundamental properties of LEC. For example, LEC assigns identical scores to nodes with identical neighborhoods (the \textit{symmetry property}) and assigns lower scores to peripheral nodes with only one neighbor than to their direct neighbors (the \textit{periphery property}). Importantly, unlike classical measures, LEC does not necessarily preserve the neighborhood-inclusion preorder proposed by \cite{schoch2016re}, illustrating its distinct mechanisms for evaluating node positions.

In Section \ref{sec:statistical}, we examine the statistical properties of Laplacian Eigenvector Centrality (LEC), focusing on its scalability and robustness across networks with diverse structural characteristics. First, we introduce proportional LEC (pLEC), a specific implementation of LEC where the order is set proportionally to the network size. pLEC exhibits scale-invariant properties, making it effective for comparing networks of varying sizes. Second, we show that while network density influences Laplacian spectra, it does not significantly affect the distribution of pLEC scores. These observations provide practical guidelines for selecting the LEC order in empirical applications. Finally, we compare pLEC with classical centrality measures, such as eigenvector and Katz-Bonacich centralities.

Section \ref{sec:economic} develps a model where agents embedded in a network balance individual adaptation with coordination incentives, extending economic frameworks of adaptive organization to network settings. The model links agents' equilibrium actions to the network's Laplacian structure, demonstrating how network topology shapes adjustments to external shocks and responses to public information.

The analysis focuses on two quadratic optimization problems to characterize LEC in an economic context. The first examines the distribution of equilibrium responses to shocks, showing that shocks targeting central agents (as identified by higher LEC scores) induce broader but moderated adjustments, while shocks affecting peripheral agents lead to localized but intense responses. The second problem addresses optimal public information design, where a principal uses LEC to target strategically influential agents, enhancing coordination across the network. These results establish LEC as a valuable metric for quantifying agents' roles in moderating shocks and facilitating coordinated responses. Additionally, we examine the application of targeting interventions as in \cite{galeotti2020targeting}, where a generalized version of LEC, termed \textit{gLEC}, serves as an index for selecting optimal agents to stabilize the economy from shocks and minimize social loss.

Compared to \cite{ballester2006s}, which characterizes Katz-Bonacich centrality as the Nash equilibrium of a linear-quadratic network game, our approach emphasizes the structure of equilibrium adjustments rather than action levels. While Katz-Bonacich centrality captures the cumulative effects of influence along paths in the network, LEC focuses on the distribution and intensity of responses, emphasizing how network topology moderates the propagation of shocks and information. This distinction highlights LEC's unique role in identifying critical nodes for both coordination and attenuation.

Centrality measures are widely used in empirical studies to examine how network structures influence economic and social outcomes. For instance, they are applied to identify key individuals in information diffusion, study technology adoption, and evaluate peer effects in networked settings. \cite{banerjee2013diffusion} examine the role of social networks in the diffusion of microfinance across 43 villages in Karnataka, India. By introducing novel centrality measures, such as communication and diffusion centralities, they analyze how leaders' network positions facilitate the spread of information and influence non-leaders' participation in microfinance programs. Their findings highlight the importance of network centrality in understanding diffusion processes and participation patterns.

Building on this work, Section \ref{sec:microfinance} demonstrates how LEC complements existing centrality measures, offering a distinct perspective on how network structures shape diffusion outcomes. Specifically, LEC identifies structurally critical nodes whose network positions impose coordination pressures, contrasting with measures like eigenvector centrality, which emphasize influence propagation. The analysis provides new insights into the roles of leaders and network bottlenecks in determining participation rates, broadening the understanding of diffusion dynamics in networked settings.

The insights from this research have broad implications for both network theory and practical applications. By providing a framework that links the spectral properties of networks to centrality analysis, LEC offers a novel perspective on understanding network structures beyond traditional centrality measures. Its tractable approach facilitates targeted analyses of core-periphery structures, coordination pressures in social ties, and structural bottlenecks. These features make LEC particularly valuable for studying diffusion processes, designing information dissemination strategies, and evaluating resilience in social and economic networks.

\section{Related literature}\label{sec:literatrue}

Recent advancements in centrality analysis have focused on providing unified frameworks for understanding diverse measures of node importance. \cite{schoch2016re} demonstrate that many centrality indices, despite their conceptual differences, share a foundational property based on neighborhood inclusion, offering a unifying path algebra framework for centrality measures.%
\footnote{\cite{schoch2018centrality} extended this work by proposing rank probabilities and partial orders, bypassing the reliance on numerical indices to assess centrality.}
\cite{borgatti2006graph} and \cite{bloch2023centrality} further contributed by identifying how existing centrality measures aggregate local connectivity and path/walk length information in distinct ways, leading to divergences and correlations in centrality scores. Laplacian Eigenvector Centrality (LEC), by decomposing the structural information of networks through the Laplacian spectrum, diverges from classical centrality measures in significant respects. Notably, LEC does not satisfy the preservation property of the neighborhood inclusion preorder proposed by \cite{schoch2016re}, distinguishing it from traditional indices.%
\footnote{For further discussion, see Subsection \ref{sec:basic_properties}.}
Building on the framework proposed by \cite{bloch2023centrality}, which interprets centrality measures as aggregating nodal statistics, LEC can be viewed as the aggregation of squared Laplacian eigenvector components as nodal statistics, satisfying their main axioms. However, these statistics, derived from spectral decomposition, capture structural aspects distinct from path- or walk-based statistics.

Recent theoretical studies have advanced the axiomatic approach to centrality in undirected graphs, formalizing and examining the properties that centrality measures should or would satisfy.%
\footnote{This line of research has been intensively developed for directed graphs, demonstrating various consistency and monotonicity properties of centrality measures (\citealp{boldi2014axioms}; \citealp{palacios2004measurement}; \citealp{kitti2016axioms}; among others).}
For instance, \citet{boldi2023monotonicity} investigate (rank and score) monotonicity under edge additions, showing that many classical measures fail to satisfy these intuitive axioms. Other works address different aspects, such as distance-based measures (\citealp{skibski2018axioms}; \citealp{chebotarev2023selection}), influence-based measures (\citealp{dequiedt2017local}; \citealp{chebotarev2023selection}), and cooperative-game-based measures (\citealp{skibski2018axiomatic}). While this paper provides some intuitive and consistent properties for LEC, these axiomatizations also highlight how spectral-based methods deviate from path- and distance-based methodologies by violating certain axioms, offering a complementary perspective.

Correlation studies represent a traditional approach to exploring the relationships among centrality measures, relying on both real-world data and simulated networks generated from random graph models. These studies generally report high correlations among centrality measures, although exceptions arise in certain network types (\citealp{bolland1988sorting}; \citealp{faust1997centrality}; \citealp{valente2008correlated}). 
Recent works, including \citet{schoch2017correlations} and \citet{oldham2019consistency}, investigate the mechanisms by which centrality indices produce similar or distinct scores, depending on network topologies.

Examining Laplacian Eigenvector Centrality (LEC) reveals that it shows a consistently higher rank correlation with degree centrality, particularly in large-scale networks, compared to other centrality measures.%
\footnote{The high rank correlation between LEC and degree centrality does not imply that LEC acts as a direct proxy for degree centrality. Instead, LEC distorts degrees through spectral decomposition and LEC orders.}
This behavior can be attributed to the \textit{localization of Laplacian eigenvectors}, a phenomenon where the components of a Laplacian eigenvector have disproportionately large values concentrated on a subset of nodes, while the values for other nodes remain small or near zero. This concentration is not random but systematically influenced by the properties of the network, particularly the degree distribution (\citealp{mcgraw2008laplacian}; \citealp{hata2017localization}).
Related to this, eigenvector centrality derived from the adjacency matrix is also known to exhibit localization, particularly in networks with high-degree hubs or power-law degree distributions. This localization leads to a loss of meaningful distinctions among nodes, reducing its effectiveness as a centrality measure in such settings (\citealp{martin2014localization}). LEC, by relying on the (multiple) Laplacian eigenvectors, avoids these pitfalls and remains stable even in networks with pronounced hubs or clustered structures, making it a robust alternative to adjacency-based eigenvector centrality.

The Laplacian spectrum approach has significantly contributed to advancements in the fields of computer science, machine learning, and network analysis. This is because the Laplacian spectrum effectively characterizes network structure and exhibits remarkable properties under link/node removal (\citealp{milanese2010approximating}), ensemble averageability in random scale-free networks (\citealp{kim2007ensemble}), spectral clustering (\citealp{ng2001spectral}; \citealp{von2007tutorial}), and localization, as discussed above.  

In this line, \citet{qi2012laplacian} introduced Laplacian centrality, which evaluates a node's impact on the total Laplacian energy---the sum of the squared eigenvalues of the Laplacian matrix---by quantifying the energy reduction when the node and its edges are removed. In contrast, LEC uses the Laplacian eigenvectors to define indices, while the Laplacian eigenvalues are used to evaluate the scope captured through LEC orders. In this respect, this paper is closely related to Laplacian eigenmaps for dimensionality reduction by \citet{belkin2003laplacian}, which preserve local neighborhood information by projecting high-dimensional data onto a lower-dimensional manifold while maintaining the intrinsic geometric structure. While LEC can be viewed as an application of dimensionality reduction, it differs by focusing on node centrality and using only the diagonal entries of the dimension-reduced Laplacian matrix.

Economic studies have established a connection between Nash equilibria in network games and centrality measures, particularly Katz-Bonacich centrality, as a reflection of equilibrium behavior (\citealp{ballester2006s}; \citealp{calvo2009peer}).%
\footnote{In these works, the network represents payoff externalities between players. Another strand of research examines communication networks, exploring how information flows within networks influence equilibrium behaviors (\citealp{calvo2009information}) or emerge as equilibrium outcomes in strategic information transmission games (\citealp{calvo2015communication}).}
Building on this foundation, studies on monopoly pricing and consumption externalities demonstrate that optimal discriminatory pricing or induced demand profiles are characterized by Katz-Bonacich centrality (\citealp{bloch2013pricing}; \citealp{candogan2012optimal}). These findings highlight the role of centrality measures in optimizing outcomes within economic models of networked environments.

Spectral properties of networks have also been employed as a foundation for designing targeted interventions. \citet{galeotti2020targeting} propose eigenvector-based strategies to optimize interventions, emphasizing the importance of principal components of the adjacency matrix in making targeting policies effective. Similarly, \citet{tamura2025spectral} develop a Bayesian persuasion framework for adaptive organizational networks, utilizing the Laplacian spectrum to determine the dimensions of public signals. Together with these studies, this paper demonstrates how spectral graph theory extends the applicability of centrality measures, including LEC and its generalized version $\textit{gLEC}$, to intervention design and information dissemination strategies.

There is a growing literature on diffusion processes from both theoretical and empirical perspectives. On the theoretical side, \citet{akbarpour2023just} and \citet{awaya2024spreading} analyze the impact of seeding strategies on diffusion using formal models. \citet{kempe2003maximizing} propose algorithms for identifying influential nodes (``seeds'') in networks and evaluate their performance based on centrality measures. Our work contributes to this seeding problem by highlighting not only the role of network connections but also the incentive costs of adaptation and coordination incurred by agents on networks.

On the empirical side, studies such as \citet{banerjee2013diffusion}, \citet{banerjee2019using}, and \citet{beaman2021can} demonstrate the practical application of centrality measures in field experiments, emphasizing the significant influence of central nodes on information diffusion in real-world networks. While traditional centrality measures generally capture simple and positive effects of central nodes, LEC captures more complex contributions of central nodes, complementing traditional centrality-based approaches.

\section{A non-technical introduction to LEC}\label{sec:nontechnical}

Laplacian Eigenvector Centrality (LEC) provides a flexible method for analyzing centrality in networks, making it particularly valuable for systematically capturing layers of influence and connectivity. Unlike traditional centrality measures that focus on specific aspects of a node's position within a network, LEC allows researchers to adjust their focus. This flexibility enables analyses that range from identifying core influential agents to quantifying direct and indirect connectivity across both central, highly connected nodes and more peripheral ones.

A key feature of LEC is its adjustable parameter, known as the LEC order. With each increase in LEC order, the measure broadens its focus by adding an additional cumulative score of 1, which is distributed among nodes according to their connectivity. At order 0, for instance, the total LEC score is 1, evenly distributed among all nodes so that each node has a score of $\frac{1}{n}$, independent of the network. At higher orders, additional scores are allocated based on connectivity, allowing more prominent nodes to accumulate higher values. This progressive increase in scores allows researchers to control the scope of influence, moving from central nodes to more peripheral nodes as needed. 

\begin{figure}\centering
\subfloat[Order 0]{\includegraphics[height=0.33\textwidth]{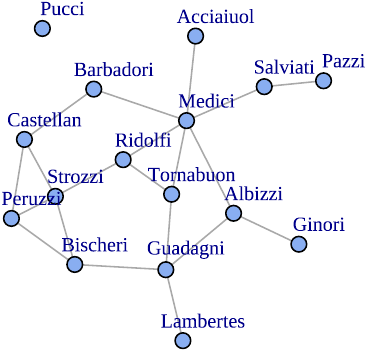}\label{fig:florentine_lec00}}\hspace{5mm}
\subfloat[Order 1]{\includegraphics[height=0.33\textwidth]{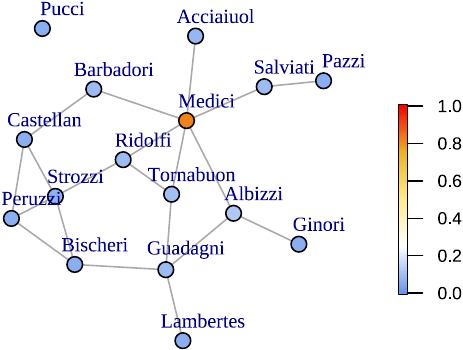}\label{fig:florentine_lec01}}\\[5mm]
\subfloat[Order 3]{\includegraphics[height=0.33\textwidth]{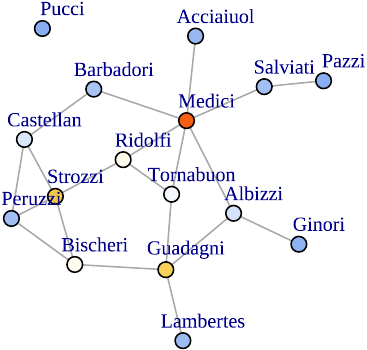}\label{fig:florentine_lec03}}\hspace{5mm}
\subfloat[Order 6]{\includegraphics[height=0.33\textwidth]{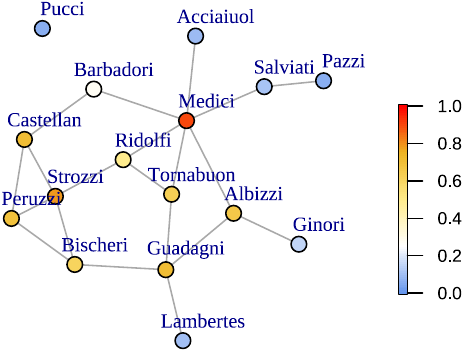}\label{fig:florentine_lec06}}
\caption{Expansion of the scope of LEC in Florentine network.}\label{fig:florentine_lec}
\end{figure}

 To illustrate the impact of increasing LEC orders, we apply this measure to the Florentine marriage network, a historical social network of influential families. As shown in Figure \ref{fig:florentine_lec}, each increase in LEC order progressively includes additional families as central within the network, capturing an increasingly comprehensive view of influence. Starting with Order 0, which assigns an equal score of $\frac{1}{16}$ to each family, LEC at order 1 distinctly identifies the Medici family as the most central, with a score of 0.85. At order 3, additional families such as the Strozzi and Guadagni emerge with scores between 0.62 and 0.66, forming a secondary layer of influence. By order 6, the LEC measure encompasses a broader set of families central to the network's core, while peripheral nodes retain lower scores. This progression demonstrates LEC's adaptability, enabling analyses that focus either on the network's core or on a wider set of influential nodes.
 
\begin{figure}
\hspace{30mm}\includegraphics[height=0.4\linewidth]{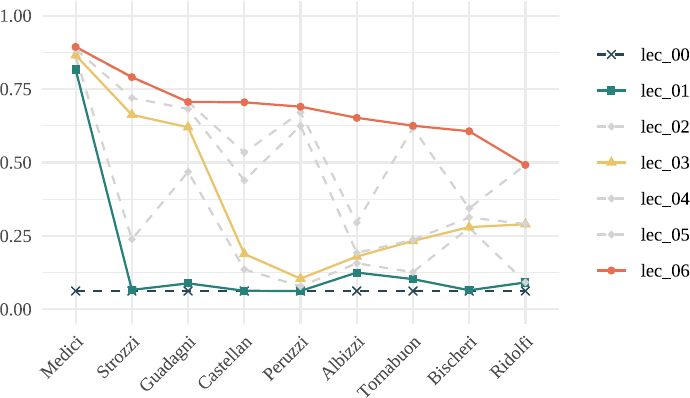}
\caption{LEC scores for different orders (Top 7 families).}\label{fig:florentine_expansion}
\end{figure}

Figures \ref{fig:florentine_expansion} and \ref{fig:florentine_other} illustrate the distinctive features of Laplacian Eigenvector Centrality (LEC) by comparing it with traditional centrality measures. Figure \ref{fig:florentine_lec} demonstrates how increasing the LEC order progressively expands the scope of centrality within the network, transitioning from a focus on core nodes to a broader range of influential agents. This progression highlights that LEC order is not merely a simple scaling parameter; rather, it reveals unique layers of centrality by capturing structural nuances across different levels of the network.

In Figure \ref{fig:florentine_other}, we present traditional centrality measures---including degree, betweenness, closeness, and eigenvector-based centralities---with scores normalized such that the Medici family's score is set to 1 for easier comparison. Except for betweenness centrality, these traditional measures typically exhibit consistent rankings and scores across prominent families. Eigenvector centrality and related Bonacich centralities, for instance, assign relatively higher scores to the Tornabuoni and Ridolfi families, both of whom are closely connected to the Medici, Strozzi, and Guadagni families (also depicted in Figure \ref{fig:florentine_lec}). This outcome aligns with the principles of eigenvector-based centralities, where a node's score is enhanced by connections to other high-scoring nodes. By contrast, LEC's analysis brings forward structural differences among these influential families, providing a more nuanced perspective that traditional centrality measures may overlook.
\begin{figure}
\hspace{30mm}\includegraphics[height=0.4\linewidth]{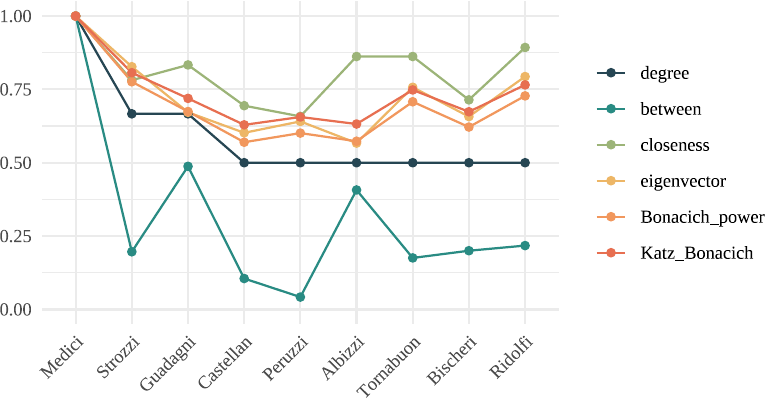}
\caption{Traditional centrality scores (Top 7 families).}\label{fig:florentine_other}
\end{figure}

In summary, this section has introduced Laplacian Eigenvector Centrality (LEC) and highlighted its adaptability in revealing different layers of network centrality through adjustable orders. The following sections will present the mathematical foundation of LEC, providing insights into its core properties and practical applications. First, we formally define LEC using Laplacian eigenvectors and examine its mathematical properties to show how it systematically captures network structure. Subsequent sections explore specific examples and the statistical properties of LEC, followed by an economic model that connects the centrality measure to the equilibrium behavior in games with strategic complementarities, and an empirical application demonstrating its relevance in real-world diffusion processes. Together, these sections expand on the concepts introduced here, offering a comprehensive view of LEC's analytical potential across diverse network settings.

\section{Laplacian eigenvector centrality: definition}\label{sec:define_lec}
\subsection{A formal introduction}\label{sec:lec_def}
Consider an undirected simple graph $G = (V, E)$, where $V = \{1, 2, \dots, n\}$ is the set of nodes (or vertices) and $E$ is the set of edges. Let $A$ denote the $n \times n$ adjacency matrix of $G$, defined such that $A_{ij} = 1$ if there is an edge between nodes $i$ and $j$, and $A_{ij} = 0$ otherwise. Since $G$ is an undirected simple graph, $A$ is symmetric and has zero diagonal entries (i.e., $A_{ii} = 0$ for all $i$).

The degree of a node $i$, denoted by $d_i$, is the number of edges connected to node $i$, calculated as $d_i = \sum_{j=1}^{n} A_{ij}$. The degree matrix of the graph $G$ is then defined as $D = \text{diag}(d_1, d_2, \dots, d_n)$, where $D$ is a diagonal matrix with the degrees of each node along its diagonal.

The Laplacian matrix $L$ of the graph $G$ is defined as $L = D - A$. The diagonal entries of $L$ are given by the degrees of the nodes, $L_{ii} = d_i$, and the off-diagonal entries are $L_{ij} = -A_{ij}$. Since $A$ is symmetric for an undirected graph, $L$ is also symmetric, and each row sum of $L$ equals zero. The Laplacian matrix $L$ is known to be positive semidefinite, meaning that all its eigenvalues, collectively referred to as the Laplacian spectrum, are nonnegative. We denote the eigenvalues in descending order as follows:
\begin{equation}
\lambda_1 \geq \lambda_2 \geq \cdots \geq \lambda_n.
\end{equation}
The smallest eigenvalue, $\lambda_n$, is always zero, and its corresponding eigenvector is proportional to the vector of ones, $\bm{q}_n \propto \bm{1}_n$. By convention, we normalize the eigenvectors $\bm{q}_1, \dots, \bm{q}_n$ so that they form an orthogonal set: each eigenvector has unit length ($\bm{q}_i \cdot \bm{q}_i = 1$ for each $i$) and they are mutually orthogonal ($\bm{q}_i \cdot \bm{q}_j = 0$ for $i \neq j$).

We introduce some notation for clarity. Let $\bm{q}_i^2$ denote the square of the vector $\bm{q}_i$.%
\footnote{Specifically, $\bm{q}_i^2$ is a vector of length $n$, where the $k$-th entry is the square of the $k$-th entry of $\bm{q}_i$, or equivalently, $\bm{q}_i^2 = \text{diag}(\bm{q}_i \bm{q}_i')$ in matrix notation.} 
For notational simplicity, we define $\bm{q}_0 = \bm{q}_n = \frac{1}{\sqrt{n}} \bm{1}_n$, so that $\bm{q}_0^2 = \frac{1}{n} \bm{1}_n$.

We now formally define the Laplacian Eigenvector Centrality (LEC) using the Laplacian eigenvectors.%
\footnote{In (\ref{eq:lec_modified}) below, we present a modified version of this definition to account for cases of eigenvalue multiplicity. Here, we assume $\lambda_{r} < \lambda_{r+1}$ for simplicity.}
\begin{definition}[LEC of order $r$]
	The Laplacian Eigenvector Centrality (LEC) of order $r$ is defined as
	\begin{equation}\label{eq:lec_def}
		\bm{c}^{LEC(r)} =  \bm{q}_0^2 + \bm{q}_1^2 + \cdots + \bm{q}_r^2.
	\end{equation}
	The LEC score for node $i$ is then expressed as 
	\begin{equation}
		c^{LEC(r)}_i = \sum_{s=0}^{r}\left[q_s(i)\right]^2
	\end{equation}
	where $q_s(i)$ is the $i$-th entry of vector $\bm{q}_s$.
\end{definition}

At order zero, each node receives an equal score of $c_i^{LEC(0)}=\frac{1}{n}$. For higher orders, LEC assigns nonnegative values up to $r$ times, determined by the squared Laplacian eigenvectors. This allocation reflects the network's structure, as nodes with greater centrality tend to accumulate higher scores at smaller LEC orders, as demonstrated in Section \ref{sec:nontechnical}.

Unlike traditional centrality measures, the interpretation of LEC scores is not straightforward due to its reliance on the spectral properties of the Laplacian matrix. This complexity arises from two primary factors. First, the eigenvectors of the Laplacian matrix reflect global structural information about the entire network, making their direct interpretation less intuitive. Second, the cumulative formulation of LEC, as the sum of squared components of multiple eigenvectors, further complicates a simple explanation of the scores. Nevertheless, these features are integral to LEC's ability to capture hierarchical layers within the network, as demonstrated in Section \ref{sec:nontechnical}. In later sections, we present how the formulation is based on dimensionality reduction of Laplacian matrices (Subsection \ref{sec:dimensionality_reduction}) and explain why it captures core components of networks by utilizing the localization effects of Laplacian eigenvectors (Subsection \ref{sec:localization}).

A technical, but important, note on the multiplicity of Laplacian eigenvalues is necessary here. Suppose $K$ is the set of indices such that $\lambda_k = \lambda_{k'}$ for any $k, k' \in K$. Let $\underline{k}$ and $\bar{k}$ denote the smallest and largest elements of $K$, respectively. Then, for $r$ such that $\underline{k} \leq r < \bar{k}$, the LEC is modified as follows:
\begin{equation}\label{eq:lec_modified}
\begin{aligned}
	\bm{c}^{LEC(r)} =&  \sum_{s = 0}^{\underline{k}-1} \bm{q}_s^2 + \frac{r-(\underline{k}-1)}{\bar{k}-(\underline{k}-1)} 
	\sum_{s = \underline{k}}^{\bar{k}} \bm{q}_s^2
\\
=& \bm{c}^{LEC(\underline{k}-1)} + \frac{r-(\underline{k}-1)}{\bar{k}-(\underline{k}-1)} 
	\left( \bm{c}^{LEC(\bar{k})}-\bm{c}^{LEC(\underline{k}-1)} \right).
\end{aligned}
\end{equation} 
This modification ensures that the LEC definition remains consistent and independent of the specific choice of eigenvectors $\bm{q}_k$ for $k \in K$ when eigenvalues have multiplicity. Throughout the paper, we simply refer the LEC as the modified version of LEC taking into account the Laplacian eigenvalue multiplicity.

For illustration, we consider a star graph of size 4, with one central node connected to three peripheral nodes. Specifically, suppose that node 1 is connected to all other nodes, while each node $i \neq 1$ is connected only to node 1. The Laplacian matrix for this graph is given by
\begin{equation}
	L = \begin{bmatrix}
		3 & -1 & -1 & -1 \\
		-1 & 1 & 0 & 0 \\
		-1 & 0 & 1 & 0 \\
		-1 & 0 & 0 & 1
	\end{bmatrix}.
\end{equation}

The eigenvalues and corresponding eigenvectors of the Laplacian matrix $L$ are presented in Table \ref{tab:star_eigen}.%
\footnote{Note that we can choose alternative orthogonal vectors in $\mathbb{R}^4$ as eigenvectors $\bm{q}_2$ and $\bm{q}_3$, which may differ from those presented in Table \ref{tab:star_eigen}. However, as shown below, the choice of eigenvectors for repeated eigenvalues does not affect the LEC scores.}
Using these eigenvectors, the LEC of order 1 is computed as
$\bm{c}^{LEC(1)} = \left[1, \frac{1}{3}, \frac{1}{3}, \frac{1}{3}\right]'$, 
where node 1 receives the maximum score of 1, and the remaining nodes equally share the remaining score of 1.%
\footnote{Note that as presented in Property \ref{prop:summation} below, at order 1, the sum of the LEC scores across all nodes equals 2.}

For orders 2 and 3, we apply the modified LEC because $\lambda_2 = \lambda_3 = 1$. Thus,
\begin{equation}\nonumber
\begin{aligned}
\bm{c}^{LEC(2)} =& \bm{c}^{LEC(1)} + \frac{1}{2} q_2^2 + \frac{1}{2} q_3^2
\\
=& 	\left[1, \frac{2}{3}, \frac{2}{3}, \frac{2}{3}\right]'.
\end{aligned}
\end{equation}
Since node 1 has already the maximum score at order 1, the LEC of order 2 equally assigns the score to the periphery nodes. At order 3, the LEC is fully expanded and results in $\bm{c}^{LEC(3)}=\bm{1}_4$.
The star graph example presented here implicitly presents several important properties of LEC, which are summarized in the next subsection. 

\begin{table}
\centering
\begin{tabular}[t]{c|c|c|c|c}
\toprule
Eigenvalues ($\lambda_i$) & $3$ & $1$ & $1$ & $0$ \\
\midrule
Eigenvectors ($\bm{q}_i$) &
$\frac{1}{\sqrt{12}} \begin{bmatrix} 3 \\ -1 \\ -1 \\ -1 \end{bmatrix}$ &
$\frac{1}{\sqrt{2}} \begin{bmatrix} 0 \\ 0 \\ -1 \\ 1 \end{bmatrix}$ &
$\frac{1}{\sqrt{6}} \begin{bmatrix} 0 \\ 2 \\ -1 \\ -1 \end{bmatrix}$ &
$\frac{1}{2} \begin{bmatrix} 1 \\ 1 \\ 1 \\ 1 \end{bmatrix}$ \\
\bottomrule
\end{tabular}
\caption{Laplacian eigenvalues and eigenvectors for a star graph of size 4.}
\label{tab:star_eigen}
\end{table}

\subsection{Basic properties}\label{sec:basic_properties}
This subsection establishes key properties of Laplacian Eigenvector Centrality (LEC), including its boundedness, symmetry for identically positioned nodes, and its evaluation of isolated and highly connected nodes. These properties offer insights into how LEC quantifies centrality by capturing the structural characteristics of networks. By examining these mathematical foundations, we clarify how LEC consistently allocates scores across varying network configurations and how it responds to specific topological features.

\subsubsection{Scoring interpretation}  
LEC scores are derived as cumulative sums of squared Laplacian eigenvector components. This structure, combined with the fundamental properties of eigenvectors, leads to several straightforward yet informative results. First, LEC scores are nondecreasing with respect to the order, ensuring that higher orders incorporate more structural information:
\begin{property}\label{prop:nondegreasing}
The LEC score is nondecreasing in its order. For any order \( r \), \(\bm{c}^{LEC(r)} \leq \bm{c}^{LEC(r+1)}\).
\end{property}

Second, LEC scores are bounded, with all nodes receiving identical scores at the extreme orders:
\begin{property}\label{prop:bound}
At order \( 0 \), all nodes receive the same score of \( \frac{1}{n} \). At order \( n-1 \), all nodes receive a score of \( 1 \):
\begin{equation}\nonumber
\bm{c}^{LEC(0)} = \frac{1}{n} \bm{1}_n \quad \text{and} \quad \bm{c}^{LEC(n-1)} = \bm{1}_n.
\end{equation}
\end{property}

Third, at any given order $r$, the sum of all LEC scores equals \( 1 + r \), indicating that the total score allocated to the network is determined by the LEC order, independent of the network's spectral properties:  
\begin{property}\label{prop:summation}  
At order \( r \), the sum of the LEC scores across all nodes equals \( 1 + r \):  
\begin{equation}\nonumber  
\sum_{j=1}^n c^{LEC(r)}_j = 1 + r.  
\end{equation}  
\end{property}

These properties provide an intuitive framework for interpreting LEC as a process of distributing a total score among \( n \) nodes based on the network structure. At order 0, all nodes receive a uniform baseline score of \( \frac{1}{n} \), reflecting the absence of structural differentiation. As the order increases, the total score grows linearly with \( 1 + r \), eventually reaching \( n \) at order \( n-1 \), where all nodes receive a score of 1. In each increase in the LEC order, one unit of ``budget'' is distributed reflecting the network's spectral components. The summation property is particularly useful when the LEC order is set proportional to the network size, as it ensures that the mean score remains constant across networks of varying scales (discussed further in Section \ref{sec:statistical}).  

\subsubsection{Symmetry and positional properties}  
The next two properties ensure that LEC consistently captures structural positions within the network.  
The symmetry property guarantees that nodes with identical structural roles within the network receive the same LEC scores.  

\begin{property}\label{prop:symmetry}
Let \(\mathcal{N}(i)\) and \(\mathcal{N}(j)\) denote the sets of neighbors of nodes \(i\) and \(j\), respectively. If \(\mathcal{N}(i) \setminus \{j\} = \mathcal{N}(j) \setminus \{i\}\), then \(c^{LEC(r)}_i = c^{LEC(r)}_j\) for any order \(r\).  
\end{property}  

The next property highlights how LEC respects hierarchical relationships between degree-1 nodes and their neighbors. Nodes with only one connection receive no more centrality than their directly connected neighbors.  

\begin{property}\label{prop:terminal}
Suppose that node $i$ has degree 1 and is connected to node $j$ (i.e., $A_{ij} = 1$). Then, for any LEC order $r$, the LEC score of node $i$ satisfies $c^{LEC(r)}_i \leq c^{LEC(r)}_j$.  
\end{property}

These symmetry and positional properties are shared by classical centrality measures such as betweenness, closeness, eigenvector, and Katz-Bonacich centralities. These measures further satisfy a monotonicity property based on neighborhood inclusion, introduced by \cite{schoch2016re} and \cite{schoch2018centrality}. 
This property states that if the neighbor set of node $i$ is a subset of the closed neighbor set of node $j$, then node $j$ should receive a centrality score at least as high as node $i$:  
\begin{equation}\nonumber
\mathcal{N}(i) \subseteq \left\{\mathcal{N}(j) \cup \{j\}\right\} \implies c_i \leq c_j.
\end{equation}  
This principle aligns with the intuition that a node with a larger or more inclusive neighborhood should, in general, be considered at least as central as a node with a smaller or less significant neighborhood.

\begin{figure}\centering
	\includegraphics[width=0.3\linewidth]{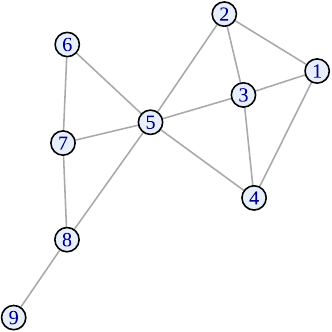}
	\caption{Violation of neighborhood-inclusion principle.}\label{fig:example_graph}
\end{figure}

To illustrate where LEC deviates from this principle, consider the graph in Figure~\ref{fig:example_graph}. Here, node \(6\) has a neighbor set \(\mathcal{N}(6) = \{5, 7\}\), while node \(7\) has a neighbor set \(\mathcal{N}(7) = \{5, 6, 8\}\). Clearly, \(\mathcal{N}(6)\) is a subset of the closed neighborhood of node \(7\), i.e., \(\mathcal{N}(6) \subseteq \{\mathcal{N}(7) \cup \{7\}\}\). Under the neighborhood-inclusion principle, node \(7\) would be expected to receive a score at least as high as node \(6\). However, the LEC scores at orders \(1\) and \(2\) are as follows:  
\[
c^{LEC(1)}_6 = 0.1322 ~>~ c^{LEC(1)}_7 = 0.1286; \quad c^{LEC(2)}_6 = 0.1322 ~>~ c^{LEC(2)}_7 = 0.1286.
\]
Contrary to the neighborhood-inclusion principle, node \(6\) receives a slightly higher score than node \(7\) at these low orders.

From a practical perspective, however, such deviations are negligible. At lower LEC orders, the majority of the total scores are concentrated on structurally prominent nodes, such as node \(3\) or node \(5\), which receive scores greater than \(0.8\). The small differences between nodes \(6\) and \(7\) are minor and have no substantial impact on the overall interpretation of centrality. 

While LEC satisfies the symmetry property shared by classical centrality measures, it does not strictly conform to the neighborhood-inclusion principle. This distinction arises because LEC relies on the spectral decomposition of the Laplacian matrix, which captures global structural patterns rather than aggregating local neighbor set relationships. However, such deviations are rare and typically involve extremely low or high LEC orders, ensuring they have negligible practical impact in network analysis.%
\footnote{Indeed, if we apply the 50\% rule to select the LEC order as introduced in Subsection \ref{sec:select_order}, the LEC score for node 7 becomes much higher than that for node 6.}

\subsubsection{Reflecting network positions}

The following properties examine how LEC evaluates nodes with specific positions or structural roles within a network. Property \ref{prop:isolated} considers isolated nodes, while Property \ref{prop:core} explores networks with fully connected hub nodes.

\begin{property}\label{prop:isolated}  
Suppose there are $k$ isolated nodes (i.e., nodes with zero degree). Then, the LEC scores for such isolated nodes are $\frac{1}{n}$ for any LEC order $r \leq n - k - 1$.  
\end{property}  

\begin{property}\label{prop:core}  
Consider a network consisting of $k$ fully connected hub nodes, each having degree $n-1$, and $n-k$ peripheral nodes with degree $k$, where each peripheral node is connected only to the hub nodes. At LEC order $k$, the hub nodes receive a score of $1$.  
\end{property} 

Property \ref{prop:isolated} highlights that isolated nodes receive a uniform baseline score of \( \frac{1}{n} \) up to a specific order. This result aligns with the interpretation of LEC as distributing scores proportionally to structural prominence; since isolated nodes have no connections, their scores remain minimal for lower orders.

Property \ref{prop:core} demonstrates how LEC identifies highly connected nodes (hubs) in networks with core-periphery structures. Specifically, when the order matches the number of hub nodes, these hubs achieve the maximum centrality score of \( 1 \). This result generalizes the star graph example discussed in Subsection \ref{sec:lec_def}, where the central node in a star achieves the upper bound score. 

\paragraph{}  
The properties outlined in this subsection establish a rigorous foundation for interpreting Laplacian Eigenvector Centrality (LEC). The scoring properties demonstrate how LEC systematically distributes scores across nodes based on their positions within the network, ensuring boundedness and consistency. The next subsection explores another aspect of LEC through eigendecomposition, highlighting the mechanisms by which LEC captures the underlying network structure.

\subsection{LEC as dimensionality reduction}\label{sec:dimensionality_reduction}
In this subsection, we provide a mathematical foundation for the Laplacian Eigenvector Centrality (LEC), which helps in understanding how it works to identify the central nodes and selecting the appropriate LEC order detailed in the next subsection. We begin with the eigendecomposition of the Laplacian matrix $L$:
\begin{equation}\label{eq:decompose_L}
L = Q \Lambda Q' = \lambda_1 \bm{q}_1 \bm{q}_1' + \lambda_2 \bm{q}_2 \bm{q}_2' + \cdots + \lambda_{n-1} \bm{q}_{n-1} \bm{q}_{n-1}' + \lambda_{n} \bm{q}_{n} \bm{q}_{n}',
\end{equation}
where $Q = \left[\bm{q}_1, \bm{q}_2, \dots, \bm{q}_n\right]$ is the $n \times n$ matrix containing the Laplacian eigenvectors as columns, and $\Lambda = \text{diag}(\lambda_1, \dots, \lambda_{n})$ is the $n \times n$ diagonal matrix with the Laplacian eigenvalues on its diagonal. The eigendecomposition in (\ref{eq:decompose_L}) shows that the Laplacian matrix $L$ can be represented as a weighted sum of $n \times n$ rank-1 matrices, with weights given by the Laplacian spectrum.%
\footnote{Each $\bm{q}_i \bm{q}_i'$ is an $n \times n$ matrix of rank 1, and the column spaces of these matrices are mutually orthogonal.}
In this formulation, $L$ is decomposed into orthogonal subspaces of rank 1, where $\lambda_i$ represents the contribution of each component $\bm{q}_i \bm{q}_i'$ to the overall composition of $L$. 

The LEC defined in (\ref{eq:lec_def}) can be expressed as $\bm{c}^{LE(r)} = \text{diag}(S)$, where
\begin{equation}
S = \bm{q}_0 \bm{q}_0' + \bm{q}_1 \bm{q}_1' + \cdots + \bm{q}_r \bm{q}_r',
\end{equation}
which can be interpreted as a weighted sum of $\{\bm{q}_i \bm{q}_i'\}_{i=0}^{n-1}$ with weights equal to 1 up to order $r$ and 0 thereafter.%
\footnote{$\bm{q}_0 \bm{q}_0'$ is a matrix with all entries equal to $\frac{1}{n}$, contributing a constant component to the LEC. We include this dimension at order 0 as a normalization convention for LEC.}
Thus, LEC is interpreted as an application of dimensionality reduction by retaining only terms with larger eigenvalues and omitting those with smaller eigenvalues.%
\footnote{For the dimensionality reduction technique, see \cite{belkin2003laplacian}.}

From such a dimensionality reduction perspective, we can define a generalized version of Laplacian Eigenvector Centrality using an arbitrary weight vector $\bm{w}$ as follows:
\begin{definition}[Generalized LEC with weight $\bm{w}$]\label{def_glec}
The generalized LEC (gLEC) with weight vector $\bm{w} = (w_0, \dots, w_{n-1})'$ is given by $\bm{c}^{gLEC(\bm{w})} = \text{diag}(S_{\bm{w}})$, where the $n \times n$ symmetric positive semidefinite matrix $S_{\bm{w}}$ is defined by
\begin{equation}
S_{\bm{w}} = w_0 (\bm{q}_0 \bm{q}_0') + w_1 (\bm{q}_1 \bm{q}_1') + \cdots + w_{n-1} (\bm{q}_{n-1} \bm{q}_{n-1}').
\end{equation}
The weight vector $\bm{w}$ is assumed to satisfy the monotonicity condition $w_0 \geq w_1 \geq \cdots \geq w_{n-1} \geq 0$.
\end{definition}

The monotonicity assumption $w_0 \geq w_1 \geq \cdots \geq w_{n-1}$ ensures that eigenvectors with higher index (corresponding lower eigenvalues in $L$) contribute progressively less to the gLEC score. While $w_0$ weights the trivial constant component $\bm{q}_0 \bm{q}_0'$, the ordering of subsequent weights aligns with the importance hierarchy in the Laplacian spectrum.

The generalized LEC can also be expressed as $\bm{c}^{gLEC(\bm{w})} = \sum_{i=0}^{n-1} w_i \bm{q}_i^2$. The standard LEC is a special case where $w_i \in \{0, 1\}$ with a monotonicity restriction, while a degree-centrality variant (differing slightly in normalization from traditional degree centrality) can be obtained by setting $w_0 = 1$ and $w_i = \lambda_i / n$, yielding
\begin{equation}
c_i^{deg} = \frac{1}{n} + \frac{d_i}{n}.
\end{equation}

An important application of gLEC arises in the economic analysis of targeted interventions, as demonstrated in Subsection \ref{sec:targeting}. In that context, the weights in gLEC are determined within the model, reflecting both the network topology captured by the Laplacian spectrum and the underlying payoff parameters. gLEC quantifies the network effects of a targeting intervention, assigning higher values to more highly connected agents who have greater influence on equilibrium outcomes. This demonstrates the utility of Laplacian spectral properties in designing policy strategies that account for network spill-over effects.

The dimensionality reduction perspective provides a unifying framework for understanding Laplacian Eigenvector Centrality (LEC) and its generalizations. By decomposing the Laplacian matrix into orthogonal components, LEC selectively incorporates spectral contributions up to a specified order, enabling the identification of structurally important nodes. The generalized LEC extends this approach by introducing flexible weighting schemes, allowing for analyses that capture different levels of network structure.

\subsection{Selecting the LEC order based on Laplacian spectrum}\label{sec:select_order}

So far, we have shown that Laplacian Eigenvector Centrality (LEC) allows analysts to control the scope of centrality analysis by adjusting the LEC order, thereby capturing different layers of influence within a network. This flexibility has been demonstrated both conceptually and mathematically. The non-technical introduction explained how the LEC order expands the scope of centrality analysis from core nodes to peripheral ones, while the formal definition revealed that the LEC order corresponds to a dimensionality reduction process on the Laplacian matrix. In this subsection, we provide a foundation and guidance on how to select and evaluate an appropriate LEC order in practice, using insights from the Laplacian spectrum.

As discussed in the last subsection, each eigenvalue $\lambda_i$ of the Laplacian matrix reflects the importance of its corresponding component in capturing the network's structure, with higher eigenvalues indicating more central, structurally significant components. Selecting an appropriate LEC order thus involves retaining only the most significant components, effectively applying dimensionality reduction to focus on primary network features while avoiding less impactful details. This approach is conceptually similar to Principal Component Analysis (PCA), where the higher eigenvalues of the covariance matrix explain more variance in the data, and lower eigenvalues are often excluded to simplify analysis and emphasize dominant patterns. By analyzing the Laplacian spectrum, practitioners can determine an LEC order that balances detail and simplicity, providing a centrality measure that aligns with their analysis goals without unnecessary complexity.

To assist practitioners in determining an appropriate LEC order, we introduce two practical criteria: observing eigenvalue decay patterns and using cumulative eigenvalue sums.

\subsubsection{Eigenvalue decay patterns}
The pattern of eigenvalue decay can provide a natural cutoff for selecting the LEC order. When the eigenvalues decrease sharply, it implies that the initial components (associated with higher eigenvalues) capture most of the network's structure, while subsequent components (with smaller eigenvalues) contribute less significantly. Therefore, a steep drop in eigenvalues suggests a point where dimensionality reduction can be applied effectively.

For instance, in the Florentine network (Figure~\ref{fig:florentine_lambda_all}), we observe a substantial gap between the first and second eigenvalues. This gap suggests that the LEC of order 1 captures a prominent structural feature of the network, supporting the conventional view from traditional centrality measures that places the Medici family in a central, influential position. In contrast, the near-equal values of the second and third eigenvalues, $\lambda_2$ and $\lambda_3$, indicate that both dimensions contribute similarly to the network's structure. Consequently, LEC order 3 might be more appropriate for identifying a second tier of central families, such as the Strozzi and Guadagni, capturing additional layers of influence in the network.

\subsubsection{Cumulative sum of eigenvalues as a threshold}
Another practical criterion for selecting the LEC order is the cumulative sum of the eigenvalues, which represents the proportion of the network structure captured by the first $r$ components. This cumulative approach helps analysts evaluate whether a chosen LEC order sufficiently covers the network's important nodes. Setting a threshold for cumulative coverage--such as 50\% or 75\%--offers a guideline for stopping at an appropriate LEC order, especially useful in larger networks to ensure the scope of analysis is neither too narrow nor too broad.

In the Florentine network (Figure~\ref{fig:florentine_cumsum}), the cumulative sum reaches approximately 50\% at order 3-4 and 75\% at order 6. These values suggest that, depending on the analysis goal, an LEC order between 3 and 6 could provide a balance between capturing essential network structure and ensuring relevant nodes are included:

\begin{description}
	\item[] Order 3-4 (50\% coverage): This range is suitable for focusing on the core structure, including the most central families.
	\item[] Order 6 (75\% coverage): This provides a broader view, capturing more peripheral layers without overly broadening the analysis scope.
\end{description}

High cumulative sums, approaching 100\%, should generally be avoided, as they reduce the variation in LEC scores among nodes, flattening the distinctions between central and peripheral nodes.

\begin{figure}\centering 
\subfloat[Laplacian eigenvalues]{\centering \includegraphics[width=0.45\textwidth]{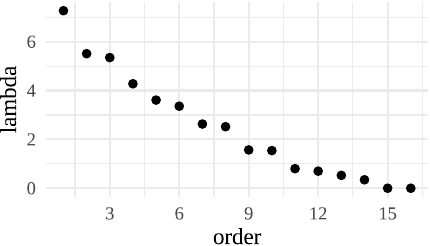}\label{fig:florentine_lambda_all}}
\quad
\subfloat[Cumulative sum of Laplacian eigenvalues]{\centering \includegraphics[width=0.45\textwidth]{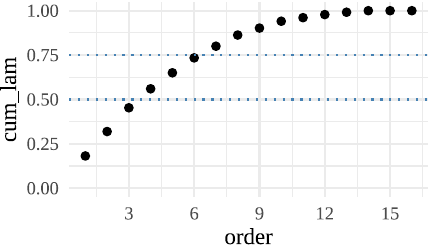}\label{fig:florentine_cumsum}}
\caption{Laplacian spectrum as a guide for the LEC order choice.}
\end{figure}

In summary, the choice of LEC order should reflect the objectives of the analysis and the network's structural characteristics. For small to moderate networks, examining eigenvalue decay patterns may be sufficient to identify an effective LEC order. In larger networks, cumulative eigenvalue thresholds provide a practical guide, with 50\% to 75\% cumulative coverage often serving as reasonable stopping points. By following these guidelines, analysts can harness the flexibility of LEC to capture centrality at a desired level of detail, aligning with the specific demands of their research or application context.

\section{Statistical properties of LEC and pLEC}\label{sec:statistical}

Understanding the statistical properties of Laplacian Eigenvector Centrality (LEC) and its proportional variant (pLEC) is essential for interpreting their behavior across networks with diverse structural characteristics. This section examines three key questions. First, how does network size affect LEC, and what motivates the use of pLEC as a scalable alternative? Second, how are LEC distributions influenced by degree-related characteristics, such as average degree and degree distribution? Third, how does LEC relate to other centrality measures, and what unique insights does it provide about network structure?
To address these questions, we employ random graph models, specifically Erd\H{o}s--R\'enyi (ER) and Barab\'asi--Albert (BA) networks.%
\footnote{The Erd\H{o}s--R\'enyi (ER) model generates random graphs where each pair of nodes is connected with probability \( p \), resulting in a Poisson degree distribution. The Barab\'asi--Albert (BA) model produces scale-free networks through preferential attachment, yielding a heavy-tailed degree distribution with a small number of highly connected hub nodes. For a formal introduction, see \cite{albert2002statistical}.}

\subsection{Network size effects and scalability}

\begin{figure}
    \centering
    \includegraphics[width=0.8\textwidth]{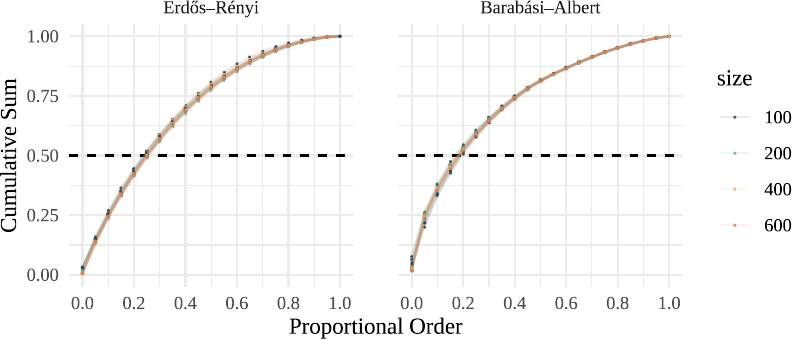}
    \caption{Cumulative sum of Laplacian eigenvalues across network sizes.}
    \label{fig:cumsum_size}
\end{figure}

Real-world networks often vary significantly in scale, requiring centrality measures that account for differences in network size. This subsection examines how scaling the LEC order proportionally to network size (\textit{pLEC}) ensures structural comparability. Using Erd\H{o}s--R\'enyi (ER) and Barab\'asi--Albert (BA) models, we demonstrate the scale-invariant properties of Laplacian eigenvalues and pLEC scores and their ability to reflect network topology.

Figure \ref{fig:cumsum_size} shows the cumulative distribution of Laplacian eigenvalues for networks of varying sizes (100 to 600 nodes), with 10 networks generated for each parameter set. The x-axis represents the eigenvalue index scaled by network size, \( x = k/n \), and the y-axis shows the cumulative sum of eigenvalues normalized to 1, \( y = \sum_{i=1}^k \lambda_i / \sum_{i=1}^n \lambda_i \).
Both Erd\H{o}s--R\'enyi (ER) and Barab\'asi--Albert (BA) networks exhibit smooth, consistent curves that are independent of network size, demonstrating the scale-invariance of the Laplacian spectrum. The cumulative sum reaches 50\% at approximately 20\% of the network size, suggesting that setting the LEC order to 0.2 times the network size captures nearly half of the network structure. Compared to ER networks, the BA networks show a slightly steeper initial curve, reflecting the influence of high-degree hubs characteristic of heavy-tailed degree distributions.

\begin{figure}
    \centering
    \includegraphics[width=0.8\textwidth]{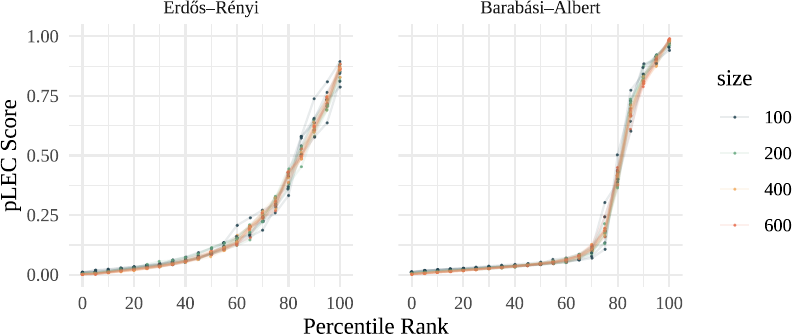} 
    \caption{Percentile distribution of pLEC scores across network sizes.}
    \label{fig:plec_size}
\end{figure}

Figure \ref{fig:plec_size} illustrates the distribution of LEC scores where the order is set to 20\% of the network size (20\%-pLEC).%
\footnote{To formally define pLEC, the \( s\%\)-pLEC is expressed as:
\begin{equation}\nonumber
\mathbf{c}^{\text{pLE}(s\%)} = \bm{q}_0^2 + \bm{q}_1^2 + \cdots + \bm{q}_{\lceil \frac{s}{100}n \rceil}^2,
\end{equation}
where \(\bm{q}_i^2\) is the component-wise square of the \( i \)-th Laplacian eigenvector, and \( s \) denotes the percentage of the network size.}
The x-axis represents the percentile rank of nodes, enabling normalized comparisons across networks. In ER networks, pLEC scores are smoothly distributed, reflecting their relatively uniform connectivity, where the degree distribution follows a Poisson distribution. In contrast, BA networks display polarized distributions: the bottom 70\% of nodes have near-zero scores, while the top 15\% attain disproportionately high scores. This pattern reflects the heavy-tailed degree distribution inherent to BA networks, where a few high-degree hubs account for a large share of the network's connectivity

By ensuring that the same proportion of the network structure is analyzed, pLEC facilitates meaningful cross-network comparisons while maintaining structural consistency. As shown in Figure \ref{fig:plec_size}, pLEC effectively accommodates differences in network size and captures the unique structural features of different network topologies. 

\subsection{Effects of average degree and degree distribution}

This subsection examines how network density influences Laplacian spectrum and pLEC scores using ER and BA models. We also compare the practical implications of applying the 50\% rule versus a proportional rule for determining LEC order, emphasizing their relevance for network analysis in varying contexts. The analysis helps to understand what aspects of network structure Laplacian spectrum and eigenvectors can capture.

\begin{figure}
    \centering
    \includegraphics[width=0.8\textwidth]{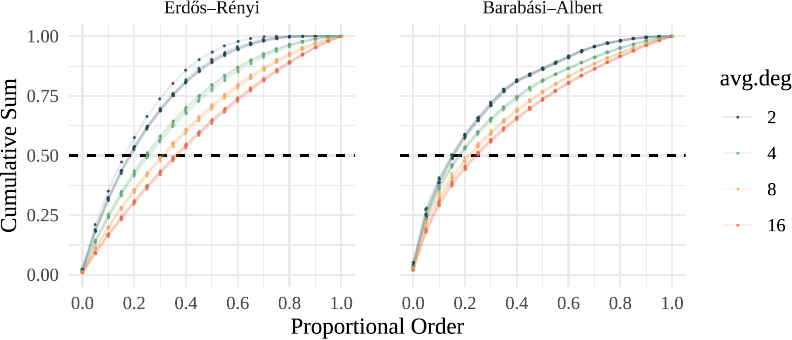} 
    \caption{Cumulative sum of Laplacian eigenvalues across densities.}
    \label{fig:cumsum_density}
\end{figure}

\subsubsection{Laplacian spectrum and network density}  
Figure \ref{fig:cumsum_density} presents the cumulative sum of Laplacian eigenvalues for networks with varying average degrees (\( \text{avg.deg} = 2, 4, 8, 16 \)). For both ER and BA models, increasing the average degree shifts the cumulative curves downward, indicating that higher densities distribute spectral contributions more evenly across a larger number of eigenvalues. This pattern reflects the denser connectivity in high-degree networks, where structural information becomes less concentrated in the leading eigenvalues. 

The impact of average degree is more pronounced in ER networks than in BA networks. In ER networks, the uniform random connection pattern spreads eigenvalue contributions more evenly as density increases. In contrast, BA networks exhibit a smaller shift, as their heavy-tailed degree distribution concentrates much of the spectral contribution in a smaller subset of dominant eigenvalues associated with high-degree hubs. This contrast highlights how network topology interacts with density to shape spectral structures.

Figure \ref{fig:plec_deg} shows the distribution of pLEC scores computed using a fixed 20\%-pLEC order. The x-axis represents the percentile rank of nodes, normalized for comparison across graphs. In ER networks, the distribution of pLEC scores remains largely invariant to changes in average degree, reflecting the consistent, homogeneous structure of these networks. In BA networks, however, pLEC scores are moderately affected by changes in density. For higher average degrees (e.g., 8 or 16), the scores become sharply polarized, with the top 15\% of nodes attaining high scores and the bottom 75\% receiving near-zero scores. By contrast, for lower average degrees (e.g., 2 or 4), the top-ranked nodes still maintain high scores, but the remaining scores decline more gradually, leading to a smoother distribution. Indeed, the increase in average degree (or density) in BA networks expands the core group of equally influential nodes, and the pLEC scores effectively capture these changes in network connectivity.
\begin{figure}
    \centering
    \includegraphics[width=0.8\textwidth]{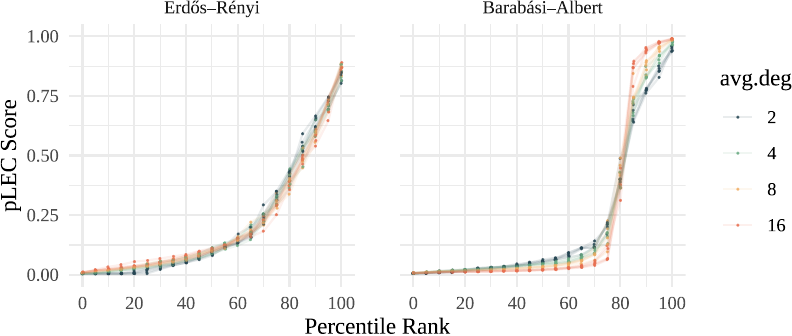}
    \caption{Distribution of pLEC scores across densities.}
    \label{fig:plec_deg}
\end{figure}

\subsubsection{Comparison of LEC order rules}  
Figure \ref{fig:plec_cumsum} illustrates the distribution of pLEC scores when the LEC order is determined using the 50\% rule, where the order is set to capture 50\% of the cumulative Laplacian eigenvalues for each network. Compared to the fixed 20\%-pLEC rule, the 50\% rule produces LEC score distributions that are more sensitive to network density in both models. This reflects the downward shift in cumulative eigenvalues, which increases the pLEC order required to capture 50\%.

The choice between these rules depends on the application context. The proportional rule (e.g., 20\%-pLEC) ensures consistency across networks of varying sizes, making it more suitable for large-scale cross-network studies. By contrast, the 50\% rule adapts to the spectral properties of individual networks, offering a more tailored but potentially less comparable analysis.

\begin{figure}
    \centering
    \includegraphics[width=0.8\textwidth]{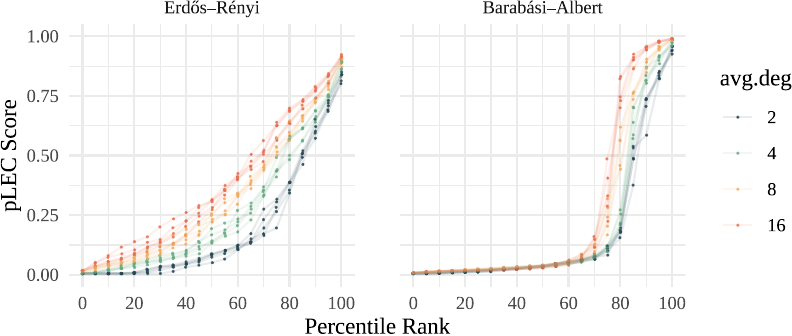}
    \caption{Distribution of pLEC scores under the 50\% rule.}
    \label{fig:plec_cumsum}
\end{figure}

\subsection{Relationship between LEC and other centrality measures}\label{sec:relation}

This subsection examines how pLEC compares to classical centrality measures, such as eigenvector and Katz-Bonacich centralities, with respect to their relationship with degree and structural differentiation. Through statistical analyses and visualizations, we highlight the unique aspects of pLEC, focusing on its correlation with degree rankings, contrasting mechanics with eigenvector centrality, and robustness to localization effects.

\subsubsection{Relationship between degree and centrality}
We analyze the relationship between centrality measures and node degrees, focusing on the extent to which centrality scores correlate with degree rankings. Figure~\ref{fig:boxplot_correlation} presents a box plot of pairwise correlations between centrality scores and node degrees across 50 networks, generated using the ER and BA models.%
\footnote{Here, we compute pLEC based on the 50\% rule demonstrated in Section \ref{sec:select_order}.} For each network type, Pearson's correlation coefficient and Kendall's $\tau$ (rank correlation) are shown.

\begin{figure}
    \centering
    \includegraphics[width=0.8\textwidth]{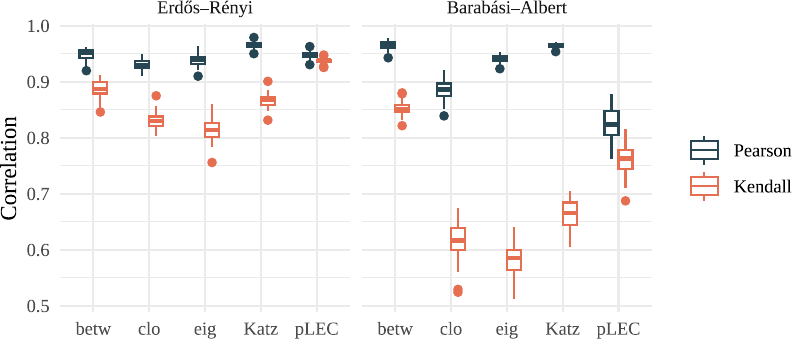}
    \caption{Box plot of pairwise correlation on 50 networks.}
    \label{fig:boxplot_correlation}
\end{figure}

The figure reveals several key patterns. First, for classical centrality measures such as closeness, eigenvector centrality, and Katz-Bonacich centrality, there is a prominent gap between Pearson and Kendall correlations, especially for BA networks. These measures exhibit high Pearson correlation coefficients, indicating a strong linear relationship with degree magnitude, but lower Kendall's $\tau$ values, reflecting weaker alignment with degree rankings. In contrast, pLEC shows relatively smaller gaps between Pearson and Kendall correlations across both ER and BA networks. Furthermore, while pLEC exhibits lower Pearson correlations compared to other measures, it achieves relatively higher Kendall's $\tau$, highlighting its stronger alignment with degree rankings.

These observations suggest that pLEC emphasizes ordinal relationships (rank consistency) over proportional scaling with degree magnitude. By contrast, other measures such as eigenvector and Katz-Bonacich centralities are more sensitive to degree magnitude but less consistent in preserving rank order. From an applied perspective, pLEC provides a distinctive approach to quantifying node centrality, combining sensitivity to degree with structural differentiation in a manner distinct from classical measures.

\subsubsection{Structural aspects captured by pLEC vs. eigenvector centrality}  
Figure~\ref{fig:scatter_plot_eig} shows the scatter plots of eigenvector centrality and pLEC scores for nodes grouped by degree in ER and BA networks of size 200. In ER networks, a negative correlation appears for nodes with intermediate degrees (e.g., degrees 7--9). This trend reflects the contrasting mechanics of eigenvector centrality and pLEC. Eigenvector centrality assigns higher scores to nodes connected to highly central neighbors, emphasizing influence propagation through densely connected hubs. In contrast, pLEC assigns higher scores to nodes connected to less central neighbors when degree and other factors are held constant. This distinction arises from pLEC's use of spectral decomposition of the Laplacian matrix, where dimensionality reduction emphasizes nodes necessary to capture the overall structure of the network.

\begin{figure}
    \centering
    \includegraphics[width=0.8\textwidth]{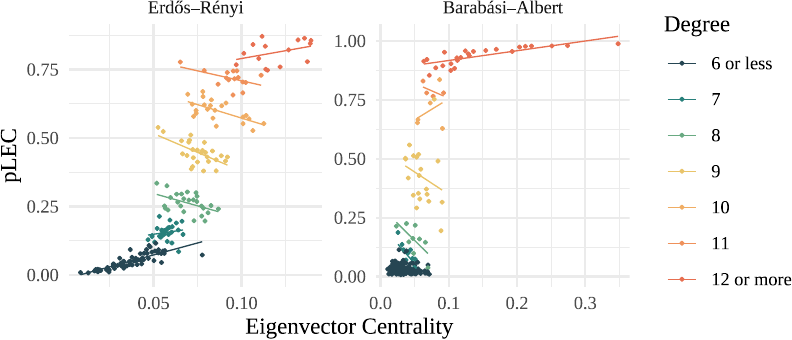}
    \caption{Plot of eigenvector centrality and pLEC.}
    \label{fig:scatter_plot_eig}
\end{figure}

In BA networks, the heavy-tailed degree distribution results in a few high-degree nodes dominating the structure. Eigenvector centrality reflects this skewness by assigning significantly higher scores to top-degree nodes, with most of the variation in scores occurring among nodes with very high degrees (e.g., degree 12 or more).%
\footnote{This is consistent with the observation in Figure \ref{fig:boxplot_correlation}, where eigenvector centrality exhibits high Pearson correlation coefficients but low rank correlation in BA networks.}
By contrast, pLEC, under the 50\% rule for determining the LEC order, adjusts its scope to capture the key structural features necessary to represent the network effectively. This results in a more gradual distribution of scores among high-degree nodes, ensuring that centrality is not overly concentrated on the most connected nodes but instead reflects the broader network hierarchy.

\subsubsection{Localization of eigenvector centrality in clustered networks}\label{sec:localization}

Localization in eigenvector centrality is a phenomenon that occurs when centrality scores disproportionately concentrate on a small subset of nodes, often overemphasizing specific regions of a network (\citealp{martin2014localization}). This issue is particularly pronounced in networks with clusters, where nodes within a single dominant cluster receive high scores, while those in other clusters, even if only slightly less connected, are assigned substantially lower scores. Such localization can obscure the contributions of nodes in less-connected or sparsely connected regions, reducing the utility of eigenvector centrality in clustered networks.

To illustrate this effect, we compare eigenvector centrality and pLEC in a synthetic clustered network, shown in Figure~\ref{fig:localization}.%
\footnote{The clustered network consists of \( k = 5 \) clusters, each generated as an Erd\H{o}s--R\'enyi (ER) graph with \( n = 50 \) nodes and connection probability \( p = 0.1 \). Clusters are rewired with a 5\% probability to introduce sparse inter-cluster edges. In the figure, node color indicates the centrality score (red = high, blue = low), and node size represents degree.}  
Figure~\ref{fig:localization}(a) demonstrates how eigenvector centrality localizes in the network, concentrating scores within one dominant cluster while nodes in other clusters receive much lower scores. This occurs because eigenvector centrality amplifies scores for nodes connected to already high-scoring neighbors, prioritizing regions with dense internal connections.

\begin{figure}\centering
	\subfloat[Eigenvector centrality]{\includegraphics[width=0.3\linewidth]{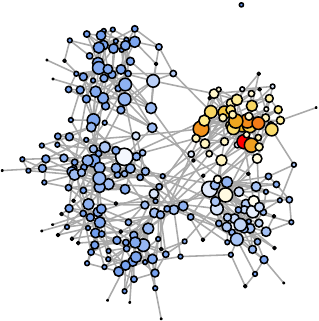}}\quad \quad \quad
	\subfloat[pLEC (50\%)]{\includegraphics[width=0.3\linewidth]{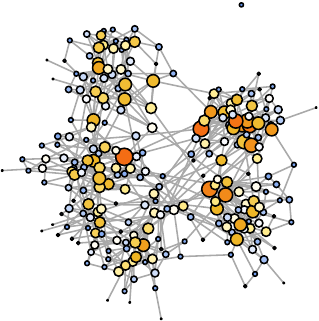}}
	
	\caption{Localization of eigenvector centrality in a clustered network.}\label{fig:localization}
\end{figure}

By contrast, Figure~\ref{fig:localization}(b) shows that pLEC mitigates this effect by capturing central nodes within each cluster.
Owing to the localization of Laplacian eigenvectors (\citealp{hata2017localization}), pLEC distributes scores more evenly across nodes with similar degrees regardless of clusters, ensuring that no single region disproportionately dominates the centrality distribution.

This comparison highlights the robustness of pLEC against localization effects in clustered networks. Unlike eigenvector centrality, which is prone to overemphasizing a single dominant region, pLEC provides a more balanced representation of node centrality across the entire network, making it particularly effective for analyzing networks with distinct structural clusters.

\section{An economic foundation: adaptation versus coordination tradeoff}\label{sec:economic}

This section develops an economic model by extending the adaptive organization framework in \cite{dessein2006adaptive} and \cite{alonso2008does} to network settings. In this model, agents balance individual adaptation preferences with coordination incentives arising from their social connections. The quadratic specification presented below provides a tractable yet flexible framework for analyzing how network structure influences equilibrium behavior in the presence of strategic complementarities.

\subsection{Model setup}\label{sec:economic_model_setup}

Consider a network with $n$ agents, indexed by $i = 1, 2, \dots, n$. Each agent $i$ chooses an action $a_i \in \mathbb{R}$ to balance adaptation to a local state $\theta_i$ with coordination incentives from their connected neighbors. Specifically, the payoff function for agent $i$ is defined as:
\begin{equation}
u_i(a_i, a_{-i}, \theta_i) = -(a_i - \theta_i)^2 - \beta \sum_{j} g_{ij} (a_i - a_j)^2,	
\end{equation}
where $\beta \geq 0$ represents the strength of coordination incentives, and $A = [g_{ij}]$ is the adjacency matrix of the network. Here, $g_{ij} = 1$ if agents $i$ and $j$ are connected, and $g_{ij} = 0$ otherwise.

This payoff function reflects two sources of losses. The first term, $-(a_i - \theta_i)^2$, penalizes differences between $a_i$ and $\theta_i$, which represents agent $i$'s preferred action when isolated from the network. The second term, $-\beta \sum_{j} g_{ij} (a_i - a_j)^2$, penalizes deviations between $a_i$ and the actions $a_j$ of connected neighbors, weighted by $\beta$. This term reflects the coordination pressure within the network, encouraging agents to align their actions with those of their neighbors.

The balance between adaptation to local states and coordination with neighbors is highlighted in each agent's best response function. The best response of agent $i$ is given by:
\begin{equation}
a_i = \frac{\theta_i + \beta \sum_{j=1}^{n} g_{ij} a_j}{1 + \beta \sum_{j=1}^{n} g_{ij}}.
\end{equation}
This expression shows that each agent's optimal action is a weighted average of their local state $\theta_i$ and the actions of their neighbors $\{ a_j : g_{ij} = 1 \}$. As the number of connections $i$ has increases, the weight on $\theta_i$ decreases, indicating that agents with more connections incur stronger coordination pressures. This phenomenon, referred to here as ``entangling social ties,'' suggests that densely connected agents are more inclined to align their actions with those of their neighbors, reducing the emphasis on their individual preferences.

The equilibrium of this network game is characterized by a linear function of the vector of local states, $\bm{\theta}$. Specifically, the equilibrium action vector $\bm{a}^*$ is given by:
\begin{equation}\label{eq:equilibrium}
\bm{a}^* = (I_n + \beta L)^{-1} \bm{\theta},	
\end{equation}
where $L$ is the Laplacian matrix associated with the adjacency matrix $A$. This formulation reveals that the equilibrium action profile depends on both the network structure (through $L$) and the local states of each agent. The Laplacian matrix captures the connectivity and positions within the coordination network, shaping how each agent adjusts their actions in response to changes in the local states of others.

\subsubsection{Comparison with Katz-Bonacich centrality}

Katz-Bonacich centrality, as developed in \cite{ballester2006s} and \cite{calvo2009peer}, captures an agent's network influence based on their equilibrium action level within a networked game. These studies consider a network game with quadratic preferences but employ a different specification:
\begin{equation}
u_i (x_i, x_{-i}) = x_i - \frac{1}{2\beta} x_i^2 + \sum_{j} g_{ij} x_i x_j,	
\end{equation}
where $x_i$ represents the action chosen by agent $i$. This payoff function combines a quadratic cost term for the agent's action level with linear interaction terms that capture peer effects among connected agents. 

Unlike the adaptation versus coordination trade-off in our model, this specification emphasizes strategic complementarity in agents' actions: higher actions by an agent's neighbors incentivize agent $i$ to increase $x_i$ as well. This strategic complementarity creates a feedback loop in which agents' actions reinforce one another. In this framework, the adjacency matrix $A$ captures both direct and indirect influence paths, with Katz-Bonacich centrality quantifying each agent's equilibrium action as a measure of their prominence within the network and the extent to which their behavior is amplified by connections.

The equilibrium action profile in this model is given by:
\begin{equation}
\bm{x}^* = (I_n - \beta A)^{-1} (\beta \bm{1}_n),	
\end{equation}
where $A$ is the adjacency matrix and $\bm{1}_n$ is a vector of ones. This formulation indicates that each agent's equilibrium action is influenced by the entire network structure, incorporating effects from both direct neighbors and indirect connections throughout the network. Katz-Bonacich centrality thus captures how an agent's position within the network determines their equilibrium behavior, reflecting a cumulative measure of influence through immediate connections and higher-order paths.

\subsection{LEC as the entangling social ties}

We develop two quadratic optimization problems to analyze how network structure influences agents' equilibrium responses to external shocks and the design of public information. These problems provide an economic foundation for Laplacian Eigenvector Centrality (LEC) by demonstrating its role in quantifying the influence of network connectivity and positions on equilibrium behavior. Specifically, the optimization problems reveal how LEC identifies central agents who play pivotal roles in distributing the impacts of shocks across the network and coordinating agents' responses through information disclosure. This analysis establishes an economic interpretation of LEC, highlighting its importance in determining how networked agents respond to external interventions and shared signals.

\subsubsection{Analyzing equilibrium responses to shocks}\label{sec:attenuating_shocks}

This first optimization problem examines how the direction of an external shock affects the range and intensity of equilibrium adjustments across a network. In particular, we investigate how shocks directed at central agents with extensive connections induce broader but more moderated responses across the network, while shocks directed at peripheral agents lead to more localized and intense adjustments. This distinction highlights the importance of network centrality in shaping the distribution of agents' equilibrium responses to external disturbances.

To formalize this analysis, we set up a quadratic optimization problem to identify the shock direction that minimizes aggregate deviations in equilibrium actions. The optimization problem is defined as:
\begin{equation}
\min_{\bm{\theta}} \bm{a}' \bm{a} \quad \text{s.t.}~ \bm{\theta} \in \mathcal{B} \equiv \{\tilde{\bm{\theta}} \in \mathbb{R}^n : \tilde{\bm{\theta}}' \tilde{\bm{\theta}} = 1 \},	
\end{equation}
where $ \bm{a} $ represents the vector of agents' equilibrium responses to a shock applied in the direction of $\bm{\theta}$, and $\mathcal{B}$ is a unit ball constraint that normalizes the shock's magnitude. By minimizing the total deviation $ \bm{a}' \bm{a} $, this problem identifies the shock direction that is most effectively distributed across the network in equilibrium, reflecting the network's capacity to dilute the shock impact.

Applying the Courant-Fischer min-max theorem via the Rayleigh quotient, we find that the solution to this optimization problem aligns with the eigenvector $ \bm{q}_1 $ associated with the largest eigenvalue $ \lambda_1 $ of the Laplacian matrix $ L $.%
\footnote{For details, see \cite{horn2013matrix}.}
This result implies that shocks aligned with $ \bm{q}_1 $ produce the smallest aggregate deviation in equilibrium actions. The components of $ \bm{q}_1 $ indicate the degree to which each agent contributes to the network's distribution of adjustments, with larger components corresponding to agents who play key roles in moderating the shock's effect.

To verify that shocks in the direction of $ \bm{q}_1 $ minimize equilibrium deviations, consider an initial equilibrium state $\bm{a}^* $ and apply a small perturbation $ \delta $ along $ \bm{q}_1 $, so that the new state vector becomes $ \bm{\theta}^{**} = \bm{\theta} + \delta \bm{q}_1 $. The corresponding equilibrium action vector $\bm{a}^{**}$ is given by:
\begin{equation}
\bm{a}^{**} = (I_n + \beta L)^{-1} \bm{\theta}^{**} = (I_n + \beta L)^{-1} \bm{\theta} + \delta (I_n + \beta L)^{-1} \bm{q}_1.
\end{equation}
Since $ L \bm{q}_1 = \lambda_1 \bm{q}_1 $, we have:
\begin{equation}
(I_n + \beta L)^{-1} \bm{q}_1 = \frac{1}{1 + \beta \lambda_1} \bm{q}_1,
\end{equation}
indicating that a shock in the $ \bm{q}_1 $ direction is attenuated by a factor of $ \frac{1}{1 + \beta \lambda_1} $, representing the greatest possible attenuation in the network.

Consequently, shocks aligned with $ \bm{q}_1 $ result in minimal equilibrium impact across the network due to the extensive connectivity among agents. In this case, agents' coordination incentives, as specified by their best response functions, distribute the shock's influence broadly yet moderately across their connections. This finding demonstrates that the $ \bm{q}_1 $ direction corresponds to the network's greatest capacity to mitigate the direct effects of disturbances.

Further insights emerge from additional eigenvectors of the Laplacian, such as $ \bm{q}_2 $, which corresponds to the second-largest eigenvalue. While $ \bm{q}_1 $ represents the direction of maximal shock absorption, $ \bm{q}_2 $ and subsequent eigenvectors highlight secondary structural components, with each eigenvector revealing distinct substructures in the network.

To interpret each agent's role in response to shocks, we examine the component-wise squares of the eigenvectors $ \bm{q}_1 $, $ \bm{q}_2 $, and others. These squared values quantify the magnitude of each agent's involvement in a shock along a particular eigenvector direction, with larger values indicating agents more directly affected by shocks in that direction. Squaring each component provides a measure of influence that disregards sign, forming the basis of Laplacian Eigenvector Centrality (LEC) as a metric for how shocks are distributed across the network.

\subsubsection{Optimal design of public information}

In the second problem, we address an information design problem where a principal aims to influence agents' decisions within a network by selectively providing public signals. The principal, aware of the network structure and the interdependencies among agents, seeks to enhance coordination among the agents, who base their actions on the specific relationships defined by the network. By strategically disclosing information, the principal can guide agents' behavior to align more closely with the principal's objectives.

The principal's objective function is specified as:
\begin{equation}
v(\bm{a}, \bm{\theta}) = -\sum_{i=1}^n (a_i - \theta_i)^2 - \tilde{\beta} \sum_{i=1}^n \sum_{j \neq i} (a_i - a_j)^2,
\end{equation}
where $a_i$ denotes the action taken by agent $i$, and the state vector $\bm{\theta}$ is assumed to follow a normal distribution, $\bm{\theta} \sim N(\bm{0}_n, I_n)$.%
\footnote{The analysis can also be extended to correlated state distributions. See \cite{tamura2025spectral}.}
The first term represents adaptation losses, capturing deviations between agents' actions and their local states. The second term represents coordination losses, where $\tilde{\beta}$ reflects the principal's valuation of coordination relative to adaptation.

We formulate the principal's problem as a standard Bayesian persuasion problem, where the principal commits to a public signal that maps the realized state to disclosed statistics, thereby determining the distribution of public posterior beliefs. Let $\hat{\bm{\theta}}$ denote the public expectation of the state, and $\bm{a}(\hat{\bm{\theta}})$ be the equilibrium action profile, given by a linear function of $\hat{\bm{\theta}}$. The principal's objective is then to maximize $\mathbb{E}\left[ v(\bm{a}(\hat{\bm{\theta}}), \hat{\bm{\theta}})\right]$.

In the case of quadratic-Gaussian specifications, it is established in the literature that the optimal signal can be expressed as a linear function of the state.%
\footnote{See \cite{tamura2018bayesian} and \cite{miyashita2023lqg}.} Specifically, the principal computes the eigenvectors of the Laplacian matrix associated with the network structure and creates statistics as linear combinations of the state, \( m_i = \bm{q}_i' \bm{\theta} \), where each \( \bm{q}_i \) is an eigenvector of the Laplacian corresponding to the $i$-th largest eigenvalue. Among the statistics \( \{m_0, m_1, \dots, m_{n-1}\} \), the principal selectively discloses only a subset. In particular, a statistic \( m_k \) is disclosed if it satisfies the criterion:
\begin{equation}
\frac{1}{2n}+\frac{\beta}{n} \lambda_k \geq \tilde{\beta}.
\end{equation}
The number of disclosed statistics decreases as \( \tilde{\beta} \) increases, meaning the principal reveals fewer statistics when placing a higher weight on adaptation relative to coordination.

If it is optimal to disclose \( r+1 \) statistics, the informativeness of the signal, measured by the variability of conditional expectations, is given by 
\begin{equation}\label{eq:informativeness}
	\mathbb{E}[\hat{\bm{\theta}}\hat{\bm{\theta}}' ] = \bar{Q}_r \left( \bar{Q}_r' \bar{Q}_r \right)^{-1} \bar{Q}_r',
\end{equation}
where \( \bar{Q}_r = \left[ \bm{q}_0, \bm{q}_1, \dots, \bm{q}_r \right] \) is the \( n \times (1+r) \) matrix of the selected eigenvectors. The diagonal entries of (\ref{eq:informativeness}) correspond to the Laplacian Eigenvector Centrality (LEC) of order \( r \), indicating that the principal's optimal signal should focus on central agents to enhance coordination effectively.

In this way, selecting signals based on LEC provides a structured approach to information design, allowing the principal to enhance coordination across the network by targeting the most strategically influential agents.

\subsection{Using gLEC for targeting policy design}\label{sec:targeting} 

This subsection provides an economic foundation for the generalized Laplacian eigenvector centrality (gLEC) defined in Definition \ref{def_glec}. Consider the setting in Subsection \ref{sec:economic_model_setup}, where agents are embedded in a social and economic network, and their equilibrium actions balance adaptation and coordination incentives as characterized by \eqref{eq:equilibrium}. Suppose the economy experiences a global shock that uniformly increases the state vector. The government mitigates the shock's effects by implementing a targeted intervention policy that neutralizes its impact on a single agent. By linking the optimal intervention target to gLEC, we demonstrate how the network's structure influences the reduction of aggregate social loss.  

To formalize the analysis, we begin by describing the initial conditions and the effects of the shock. Initially, the economy is in a steady state, represented by $\bm{\theta} = \bm{0}_n$. Under this steady state, the equilibrium action vector is $\bm{a} = \bm{0}_n$. A global shock occurs, uniformly increasing the state vector to $\bm{\theta} = \bm{1}_n$. In the absence of any intervention, this shock induces an equilibrium action vector of $\bm{a} = \bm{1}_n$. The social loss resulting from the shock is defined as the aggregate quadratic disturbance in equilibrium actions, $\bm{a}' \bm{a} = \sum_{i} a_i^2$. Without intervention, this social loss equals $n$ due to the uniformity of the shock and the resulting equilibrium actions.  

To mitigate the effects of the shock, the government can intervene by targeting a single agent $i$ from the feasible set $\mathcal{T} \subseteq \{1, 2, \dots, n\}$. This intervention neutralizes the shock's impact on agent $i$, modifying the state vector to $\widetilde{\bm{\theta}}(i) = \bm{1}_n - \bm{e}_i$. Here, $\bm{e}_i$ is a unit vector of length $n$ with a 1 in the $i$-th entry and 0 elsewhere. This adjustment reduces agent $i$'s local state to zero while leaving the state of all other agents unchanged. The government minimizes the social loss, $\bm{a}' \bm{a}$, by optimally selecting the target agent $i \in \mathcal{T}$. Notably, the cost of implementing the intervention is fixed and does not depend on the specific agent targeted.  

This setup captures real-world scenarios where targeted interventions are used to mitigate systemic shocks. For example, during a pandemic, prioritizing vaccinations for healthcare workers can stabilize healthcare networks, reducing overall disruption. Similarly, in financial systems, stabilizing a key institution can prevent cascading defaults. This framework illustrates how targeting a single agent in a network can propagate benefits, minimizing aggregate losses and enhancing systemic resilience.  

We now analyze the government's targeting problem and derive the optimal intervention strategy. Fix a target $i \in \mathcal{T}$. Recall that $(I_n + \beta L)^{-2} \bm{1}_n = \bm{1}_n$, and for any eigenvector $\bm{q}_j$ of the Laplacian matrix $L$ with corresponding eigenvalue $\lambda_j$, we have $(I_n + \beta L)^{-2} \bm{q}_j = \omega_j \bm{q}_j$, where $\omega_j = \left( \frac{1}{1+\beta \lambda_j} \right)^2 \in (0,1]$.  

The aggregate social loss, $\bm{a}' \bm{a}$, can then be expressed as:  
\begin{equation}
\begin{aligned}\nonumber
\bm{a}' \bm{a} =&~ (\bm{1}_n - \bm{e}_i)' (I_n +\beta L)^{-2} (\bm{1}_n - \bm{e}_i)
\\
=&~ (n-1) -  \sum_{j=1}^{n-1} \left( 1- \omega_j \right) [q_j(i)]^2,
\end{aligned}
\end{equation}
where $q_j(i)$ is the $i$-th component of the Laplacian eigenvector $\bm{q}_j$. Here, $n-1$ represents the social loss when the government selects an agent, removes all links connected to that agent, and neutralizes the shock affecting that agent. The final term reflects the net gain from the network effect resulting from the targeting intervention.

From this, the government's targeting problem can be formulated as: 
\begin{equation}\label{eq:target_problem}
\max_{i \in \mathcal{T}}\, \varphi(i) \equiv \sum_{j=1}^{n-1} \left( 1- \omega_j \right) [q_j(i)]^2.
\end{equation}
This expression indicates that the optimal target is determined by the eigenstructure of the Laplacian and the weights $1 - \omega_j \in [0,1)$. Intuitively, targeting an agent with a high contribution to the weighted eigenvector centralities maximizes the reduction in social loss.

\begin{proposition}\label{prop_targeting}
The government selects the target for intervention based on the generalized Laplacian eigenvector centrality (gLEC) of each agent, weighted by $\phi_j = 1 - \left( \frac{1}{1+\beta \lambda_j} \right)^2$ for $j=1, \dots, n-1$.
\end{proposition}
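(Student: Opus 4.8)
The plan is to write the aggregate social loss $\bm{a}'\bm{a}$ explicitly as a function of the targeted agent $i$, reduce the government's problem \eqref{eq:target_problem} to the maximization of a single scalar index over $\mathcal{T}$, and then recognize that index as a gLEC score in the sense of Definition \ref{def_glec}. The only tools needed are the equilibrium map \eqref{eq:equilibrium}, the spectral decomposition of $L$, and the two identities already recorded in the text: $(I_n+\beta L)^{-2}\bm{1}_n = \bm{1}_n$ (a consequence of $L\bm{1}_n=\bm{0}_n$) and $(I_n+\beta L)^{-2}\bm{q}_j = \omega_j \bm{q}_j$ with $\omega_j = (1+\beta\lambda_j)^{-2}$.

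First I would substitute the intervened state $\widetilde{\bm{\theta}}(i)=\bm{1}_n-\bm{e}_i$ into \eqref{eq:equilibrium}, giving $\bm{a}=(I_n+\beta L)^{-1}(\bm{1}_n-\bm{e}_i)$ and hence $\bm{a}'\bm{a}=(\bm{1}_n-\bm{e}_i)'(I_n+\beta L)^{-2}(\bm{1}_n-\bm{e}_i)$. Expanding this quadratic form yields three terms. By $(I_n+\beta L)^{-2}\bm{1}_n=\bm{1}_n$ the first equals $n$, and the same identity with the symmetry of $(I_n+\beta L)^{-2}$ gives the cross term $\bm{1}_n'(I_n+\beta L)^{-2}\bm{e}_i=\bm{1}_n'\bm{e}_i=1$. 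For the last term I would expand $\bm{e}_i$ in the orthonormal eigenbasis $\{\bm{q}_0,\bm{q}_1,\dots,\bm{q}_{n-1}\}$, with $\bm{q}_0=\tfrac{1}{\sqrt{n}}\bm{1}_n$ spanning the kernel, so that $\bm{q}_j'\bm{e}_i=q_j(i)$ and $\bm{e}_i'(I_n+\beta L)^{-2}\bm{e}_i=\sum_{j=0}^{n-1}\omega_j[q_j(i)]^2$ with $\omega_0=1$. Combining the three pieces and using $\sum_{j=0}^{n-1}[q_j(i)]^2=\|\bm{e}_i\|^2=1$ together with $[q_0(i)]^2=\tfrac1n$, I would arrive at $\bm{a}'\bm{a}=(n-1)-\sum_{j=1}^{n-1}(1-\omega_j)[q_j(i)]^2=(n-1)-\varphi(i)$, exactly the decomposition stated before \eqref{eq:target_problem}.

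Since the constant $n-1$ does not depend on $i$, minimizing $\bm{a}'\bm{a}$ over $i\in\mathcal{T}$ is equivalent to maximizing $\varphi(i)=\sum_{j=1}^{n-1}\phi_j[q_j(i)]^2$ with $\phi_j=1-\omega_j=1-(1+\beta\lambda_j)^{-2}$. To finish I would identify $\varphi(i)$ with a gLEC score: because $\bm{q}_0^2=\tfrac1n\bm{1}_n$ is constant across nodes, $\varphi(i)$ equals the $i$-th entry of $\bm{c}^{gLEC(\bm{\phi})}=\sum_{j=0}^{n-1}\phi_j\bm{q}_j^2$ up to the node-independent shift $\phi_0/n$; the value of $\phi_0$ is immaterial for the ranking, so it can be taken as $\phi_0=\phi_1$ to keep $\bm{\phi}$ monotone as Definition \ref{def_glec} requires. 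I would also check that $\bm{\phi}$ is admissible: each $\phi_j\in[0,1)$, and since $x\mapsto 1-(1+\beta x)^{-2}$ is nondecreasing while $\lambda_1\ge\lambda_2\ge\cdots\ge\lambda_{n-1}\ge 0$, we get $\phi_1\ge\phi_2\ge\cdots\ge\phi_{n-1}\ge 0$. Hence the loss-minimizing target is the agent of highest gLEC score under the weights $\phi_j$.

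There is no deep obstacle here: once the spectral decomposition is invoked, the derivation is routine linear algebra that simply organizes the computation already sketched before \eqref{eq:target_problem}. The one point deserving a little care is the bookkeeping of the trivial $j=0$ (kernel) component—showing it contributes the same constant to $\bm{a}'\bm{a}$ and to every gLEC score, so that it drops out of the optimization—and, relatedly, phrasing the identification with gLEC so that the monotonicity requirement on the weight vector in Definition \ref{def_glec} is genuinely satisfied.
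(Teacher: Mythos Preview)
Your proposal is correct and follows essentially the same route as the paper: both substitute $\widetilde{\bm\theta}(i)=\bm 1_n-\bm e_i$ into \eqref{eq:equilibrium}, expand the quadratic form $(\bm 1_n-\bm e_i)'(I_n+\beta L)^{-2}(\bm 1_n-\bm e_i)$ via the spectral decomposition of $L$, and invoke $\sum_j[q_j(i)]^2=1$ to reach the expression before \eqref{eq:target_problem}. Your treatment is in fact slightly more thorough than the paper's proof, which stops once the objective \eqref{eq:target_problem} is obtained and does not explicitly verify the monotonicity of the weight vector required by Definition~\ref{def_glec} or discuss the irrelevance of the $j=0$ component; your handling of both points is sound.
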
  

To provide intuition, we examine three contrasting scenarios:
\begin{description}
\item[Isolated agent (Degree $0$):]
Consider an agent $s \in \mathcal{T}$ who is completely isolated from the rest of the network. For such an agent, $q_j(s) > 0$ only if $\lambda_j = 0$, which implies $1 - \omega_j = 0$. Consequently, targeting an isolated agent results in a net gain of $\varphi(s) = 0$. Intuitively, the intervention fails to propagate benefits across the network, making the gain minimal and less impactful than in more connected structures.  
\item[Hub agent in a star network (Degree $n-1$):]
Consider a hub agent $h$ in a star network, where $h \in \mathcal{T}$ is fully connected to all other nodes. The eigenvector corresponding to the largest eigenvalue, $\lambda_1 = n$, satisfies $[q_1(h)]^2 = 1 - \frac{1}{n}$. This achieves the upper bound of the targeting problem in \eqref{eq:target_problem}, given by:
\begin{equation}\nonumber
	\varphi(h) = \left[ 1-\left(\frac{1}{1+\beta n} \right)^2 \right]\left(1-\frac{1}{n}\right).
\end{equation}
Targeting the hub maximizes the network-wide reduction in social loss due to the hub's dominant influence on equilibrium actions.
\item[Uniform intervention (Benchmark):] As a benchmark, consider a uniform intervention policy in which the government reduces the local state uniformly by $\frac{1}{n}$ across all agents, while ignoring implementation costs. The resulting equilibrium action vector is $\bm{a} = \left(1 - \frac{1}{n}\right) \bm{1}_n$, and the social loss is $n - 1 - \left(1 - \frac{1}{n}\right) \leq n-1-\varphi(h)$ with equality for $\beta = \infty$. This uniform intervention strategy achieves a greater reduction in social loss than targeting any single agent. Therefore, while the model highlights the benefits of targeted intervention, it also demonstrates its limitations in mitigating the impact of shocks in a networked society.
\end{description}
 
This analysis complements \cite{galeotti2020targeting}, who study optimal targeting interventions in networks using principal component decomposition of the adjacency matrix. While their model focuses on allocating resources across multiple agents to maximize utilitarian welfare under budget constraints (similar to the framework presented in \ref{sec:attenuating_shocks}), our approach focuses on single-agent targeting to mitigate the effects of a global shock. Both frameworks emphasize the role of network structure and eigenvalue decomposition, but the scope of interventions differs: 
\cite{galeotti2020targeting} consider interventions that continuously manipulate incentives across agents on a network, whereas our model focuses on a simple and intuitive selection problem, where gLEC serves as an index for optimal targeting. Extending this analysis to incorporate more detailed cost structures for targeting policies could yield additional insights into designing efficient interventions.

\section{Empirical application: Diffusion of microfinance}\label{sec:microfinance}

This section examines the role of centrality measures in the diffusion of microfinance information, using the dataset from \cite{banerjee2013diffusion}. The study investigates how social network positions influence the spread of information and participation in microfinance programs in 43 villages in Karnataka, India. Conducted in collaboration with Bharatha Swamukti Samsthe (BSS), a microfinance institution, the analysis highlights the effectiveness of leaders---such as teachers, shopkeepers, and social group leaders---as key information disseminators.

Prior to the introduction of the microfinance program, a comprehensive baseline survey was conducted in these villages to collect detailed household and social network data.%
\footnote{The survey aggregated 12 types of social interactions, including lending or borrowing money or goods and seeking advice, into a single undirected and unweighted network to capture the overall connectedness of households.} Using this network data, \cite{banerjee2013diffusion} introduced two novel measures, communication centrality and diffusion centrality, to quantify the effectiveness of leaders in spreading information. The microfinance program was implemented in these villages following the survey, allowing researchers to examine how leaders' centrality influenced both the dissemination of information and non-leaders' participation outcomes.%
\footnote{The baseline survey included 75 villages, but participation data is available only for the 43 BSS villages where the microfinance program was implemented.}

This paper extends the analysis by focusing on four centrality measures: degree, eigenvector, Katz-Bonacich (alpha), and proportional Laplacian Eigenvector Centrality (pLEC), with detailed descriptions provided in Appendix \ref{sec:a_variable}. In the main section, pLEC is computed by setting the LEC order proportional to the network size.%
\footnote{Specifically, the LEC order is set to 20\% of the network size. An alternative specification, using the cumulative sum of Laplacian eigenvalues (the 50\% rule), is examined in Appendix \ref{sec:a_plec_cumsum50pct}. Additional centrality measures, including diffusion centrality and communication centrality introduced by \cite{banerjee2013diffusion}, are also analyzed in Appendix \ref{sec:a_diffusion}.}
This section begins by examining the distributions of centrality scores across the sample villages, highlighting similarities and differences among these measures. It then presents a regression analysis to assess the impact of leaders' network positions on non-leaders' participation rates, as conducted in \cite{banerjee2013diffusion}.

\begin{figure}
    \centering
    \includegraphics[height = 0.45\linewidth]{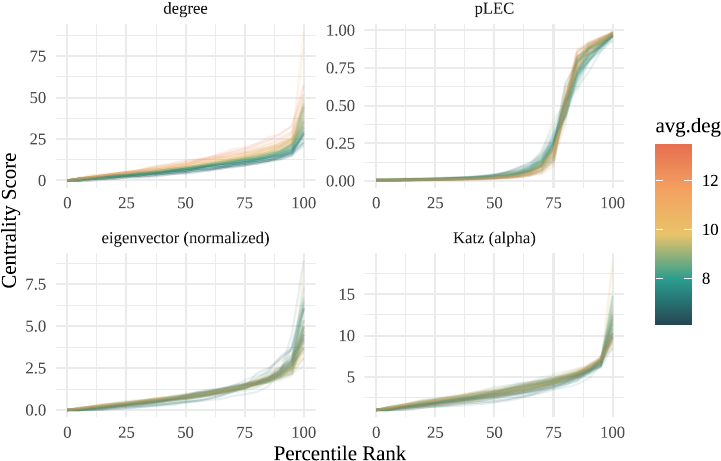}
\caption{Distributions of centrality scores (BSS).}\label{fig:dist_centr_BSS}
\end{figure}

Figure \ref{fig:dist_centr_BSS} presents the distributions of degree and other centrality measures for each village, plotted with percentile ranks on the x-axis and centrality values on the y-axis. For all centrality measures, the distributions of scores are remarkably consistent across the 43 villages, even when accounting for differences in average degree. This consistency highlights the structural regularity of the networks and indicates that these centrality measures provide comparable evaluations of node importance across networks of varying sizes and densities.

pLEC exhibits a distinct pattern compared to the other measures. It assigns very low scores to the bottom 60\% of nodes, followed by a sharp increase for the top 15--20\%, effectively capturing the network's core structure. In contrast, eigenvector centrality and Katz-Bonacich centrality display smoother distributions, with scores increasing almost linearly for the bottom 90\% of nodes and showing a sharp rise only at the top percentiles.

These differences in distribution suggest that pLEC is particularly effective at identifying a narrow subset of highly influential nodes in the network's core, while other centrality measures are better suited for capturing influence distributed more evenly across a broader range of nodes. These patterns illustrate how different centrality measures emphasize distinct structural features of the network, offering complementary perspectives on node importance.

\begin{table}
\centering
\caption{Eigenvector centrality with pLEC.}
\label{tab:regression_plec}
\begin{tabular}{lcccc}
\toprule
 & (1) & (2) & (3) & (4) \\
\midrule
log10(num\_hh) & -0.308*** & -0.322*** & -0.315*** & -0.347*** \\
               & (0.081)   & (0.082)   & (0.074)   & (0.080)   \\
len\_t         & 0.017*    & 0.018*    & 0.014*    &           \\
               & (0.006)   & (0.007)   & (0.007)   &           \\
eig\_normalized   &  0.084*   & 	       & 0.126***  & 0.157***  \\
               &  (0.034)  &           & (0.032)   & (0.031)   \\
plec\_ns20pct  &           & -0.138    & -0.353**  & -0.451*** \\
               &           & (0.130)   & (0.123)   & (0.125)   \\
Control        & yes       & yes       & yes       & yes       \\
\midrule
Num.Obs.       & 43        & 43        & 43        & 43        \\
R2             & 0.476     & 0.433     & 0.528     & 0.454     \\
R2 Adj.        & 0.371     & 0.320     & 0.416     & 0.345     \\
\bottomrule
\multicolumn{5}{l}{\footnotesize{+ p $<$ 0.1, * p $<$ 0.05, ** p $<$ 0.01, *** p $<$ 0.001}} \\
\end{tabular}
\end{table}

Following \cite{banerjee2013diffusion}, we analyze how leaders' network centrality influences non-leaders' microfinance participation rates using village-level regression analysis. Table \ref{tab:regression_plec} presents the results, where the dependent variable is the average participation rate among non-leader households. The independent variables include leaders' average eigenvector centrality (\textit{c\_eig\_normalized}) and pLEC scores (\textit{plec\_ns20pct}), as well as village characteristics such as network size (\textit{log10(num\_hh)}) and survey duration (\textit{len\_t}).%
\footnote{The other control variables include \textit{savings}, \textit{shgparticipate}, and \textit{fracGM\_survey}, as presented in Table \ref{tab:stats_BSS} in Appendix \ref{sec:a_summarystats}.}

Column (1) examines the effect of leaders' eigenvector centrality, controlling for village characteristics. Its positive and significant coefficient indicates that leaders with higher eigenvector centrality encourage greater participation among non-leaders.%
\footnote{In all columns, \textit{log10(num\_hh)} has a significant negative association with participation, while \textit{len\_t} correlates positively.}  
Column (2) replaces eigenvector centrality with pLEC (\textit{plec\_ns20pct}). The coefficient for pLEC is negative but not statistically significant, suggesting that, when considered independently, pLEC does not have a strong relationship with non-leader participation.

Columns (3) and (4) include both \textit{c\_eig\_normalized} and \textit{plec\_ns20pct}. In Column (3), the positive and significant effect of \textit{c\_eig\_normalized} becomes stronger compared to Column (1), while the negative effect of \textit{plec\_ns20pct} becomes both significant and larger in magnitude compared to Column (2). These results suggest that each centrality measure captures distinct aspects of leaders' network influence. Eigenvector centrality reflects leaders' ability to encourage participation through broader network connectivity, while pLEC appears to highlight structural constraints or bottlenecks that hinder non-leader engagement. Column (4), which excludes \textit{len\_t} as a control, further amplifies these patterns, with a larger positive coefficient for \textit{c\_eig\_normalized} and a more pronounced negative coefficient for \textit{plec\_ns20pct}.%
\footnote{The adjusted \( R^2 \) in Column (3) reaches 0.416, which is substantially higher than the adjusted \( R^2 \) from our replication of \cite{banerjee2013diffusion}. Details of this replication are provided in Appendix \ref{sec:a_centr_diffu}.}

Overall, the results highlight the contrasting roles of the two centrality measures. While eigenvector centrality captures leaders' capacity to spread influence, pLEC reflects the challenges associated with social ties, revealing a nontrivial relationship between network position and participation rates.%
\footnote{Robustness checks presented in Appendix \ref{sec:a_robustness} confirm the main findings, demonstrating consistency across alternative centrality measures, variations in pLEC specifications, and adjustments for network size and survey duration.}

\section{Conclusion}\label{sec:conclusion}

This paper introduces a new centrality measure, Laplacian Eigenvector Centrality (LEC), along with an analytical framework guided by the Laplacian spectrum. While LEC intuitively quantifies node positions and network structure with basic properties such as symmetry and periphery positions, it exhibits distinctive features compared to existing centrality measures and plays complementary roles in empirical research.

Using random graph models, we demonstrated the robustness and scalability of LEC, with practical guidance for selecting LEC orders proportional to network size. In particular, the analysis highlighted consistent scoring in scale-free and clustered networks.

The economic model and empirical application provided a distinctive perspective on social networks. While the literature extensively emphasizes the positive role of social connections in diffusion processes from an informational perspective, our findings highlight a contrasting aspect. Specifically, we show that social connections can generate coordination pressures, reducing responses to external shocks and adaptation to new technology. The regression analysis further suggested that such an incentive effect of social ties plays a crucial role in the dissemination of information and the design of targeting policies.

Future research can further explore LEC's applicability in specific social and economic networks, particularly those characterized by varying degree distributions and distinct incentive structures, thereby extending its relevance to broader settings in network science and policy design.

\bibliographystyle{ecca}

\clearpage
\appendix
\section*{Appendix}
\onehalfspacing
\renewcommand{\thefigure}{B\arabic{figure}}
\setcounter{figure}{0}
\renewcommand{\thetable}{B\arabic{table}}
\setcounter{table}{0}

\section{Omitted proofs}\label{sec:a_proof}
\begin{proof}[Proof of Property \ref{prop:symmetry}]
Let $G = (E,V)$ be an undirected graph of $n$ nodes and $L$ be the corresponding Laplacian matrix. Let $\mathcal{N}(i)$ denote the set of node $i$'s neighbors. That is, $\mathcal{N}(i) \equiv \{ j \in V: \{ij\} \in E \}$ or equivalently, $\mathcal{N}(i) \equiv \{ j \in V: A_{ij} = 1\}$.

Suppose without loss of generality that node 1 and 2 have the identical neighbors except for themselves. That is, $\mathcal{N}(1) \backslash \{2\} = \mathcal{N}(2) \backslash \{1\}$, or equivalently, $A_{1j} = A_{2j}$ for any $j \in \{3, 4, \dots, n\}$.

Let $\lambda$ and $\bm{q}$ be the Laplacian eigenvalue and its corresponding eigenvector. Then, by definition, we have $L \bm{q} = \lambda \bm{q}$. The first two components of this equation are expressed as
\begin{equation}\nonumber
\begin{aligned}
d_1 q(1) - A_{12} q(2) - \sum_{j = 3}^n A_{1j}q(j) = \lambda q(1)
\\
-A_{21}q(1) + d_2 q(2) - \sum_{j = 3}^n A_{2j}q(j) = \lambda q(2)
\end{aligned}
\end{equation}
where $d_1$ and $d_2$ denote the degree of node 1 and 2, $A_{12} = A_{21} \in \{0,1\}$ indicates the link between 1 and 2, and $q(j)$ denotes the $j$-th entry of vector $\bm{q}$. Note that $d_1 = d_2$ from the assumption of identical neighborhood.
Combining these two equations, we have 
\begin{equation}\label{eq:key_symmetry}
	(d_1+A_{21}-\lambda) (q(1) - q(2)) = 0.
\end{equation}

Suppose that for any order $k \in \{1,2,\dots, n-1\}$, $\lambda_k \neq d_1 + A_{21}$. Then, (\ref{eq:key_symmetry}) implies $q_k(1)=q_k(2)$ for all $k$. In this case, the LEC scores satisfy $c^{LEC(r)}_1 = c^{LEC(r)}_2$ for any LEC order $r$.

Suppose that for some $k$, we have $\lambda_k = d_1 + A_{21}$. Let $\underline{k}$ and $\bar{k}$ denote the smallest and largest index for such eigenvalues. We have three cases depending on the LEC order $r$.

\begin{description}
	\item Case 1: $r <\underline{k}$. \quad In this case, (\ref{eq:key_symmetry}) implies $q_{s}(1) = q_{s}(2)$ for $s \leq r$. Therefore, $c^{LEC(r)}_{1}  = \frac{1}{n}+\sum_{s = 1}^{r} q_{s}(1)^2 = \frac{1}{n}+\sum_{s = 1}^{r} q_{s}(2)^2 = c^{LEC(r)}_{2}$. 
	\item Case 2: $r>\bar{k}$. \quad In this case, (\ref{eq:key_symmetry}) implies $q_{s}(1) = q_{s}(2)$ for $s \geq r$. Therefore, $c^{LEC(r)}_{1}  = \frac{1}{n}+\sum_{s = 1}^{r} q_{s}(1)^2 = 1-\sum_{s = r+1}^{n-1} q_{s}(1)^2 = 1-\sum_{s = r+1}^{n-1} q_{s}(2)^2 = c^{LEC(r)}_{2}$. 
	\item Case 3: $r \in \left[\underline{k},\bar{k} \right]$.\quad In this case, we apply the modified LEC to take into account the eigenvalue multiplicity:
	\begin{equation}\nonumber
		\begin{aligned}
			c^{LEC(r)}_{1}  = \frac{1}{n}+\sum_{\tilde{r} = 1}^{\underline{k}-1} q_{s}(1)^2 
+ \frac{r-(\underline{k}-1)}{\bar{k}-(\underline{k}-1)} \sum_{s = \underline{k}}^{\bar{k}} q_{s}(1)^2 .
		\end{aligned}
	\end{equation}
	The proof completes as we check the following equation hold:
	\begin{equation}\nonumber
		\sum_{s = \underline{k}}^{\bar{k}} q_{s}(1)^2 = 1-\sum_{s \notin [ \underline{k},\bar{k}]} q_{s}(1)^2
		= 1-\sum_{s \notin [ \underline{k},\bar{k}]} q_{s}(2)^2
		=
		\sum_{s = \underline{k}}^{\bar{k}} q_{s}(2)^2 .
	\end{equation}
\end{description}
Hence, for any LEC order, the LEC scores for 1 and for 2 must be identical.
\end{proof}
\medskip

\begin{proof}[Proof of Property \ref{prop:terminal}]
Suppose that node 1 has degree 1 and is connected only to node 2.
Let $\lambda$ and $\bm{q}$ denote the Laplacian eigenvalue and its corresponding eigenvector.
The first component of the equation $L \bm {q} = \lambda \bm{q}$ is given by 
\begin{equation}\label{eq:terminal}
	q(1) - q(2) = \lambda q(1).
\end{equation}

\begin{description}
	\item For $0 \leq \lambda \leq 1$, (\ref{eq:terminal}) implies that $(1-\lambda) q(1) = q(2)$. Then, either $0 \leq q(2) \leq q(1)$ or $q(1) \leq q(2) \leq 0$ holds.
	\item For $1< \lambda \leq 2$, (\ref{eq:terminal}) implies that $(\lambda-1) q(1) = -q(2)$. Then, either $0 \leq -q(2) \leq q(1)$ or $q(1) \leq -q(2) \leq 0$ holds.
	\item For $2< \lambda$, (\ref{eq:terminal}) implies that $(\lambda-1) q(1) = -q(2)$. Then, either $0 \leq q(1) \leq -q(2)$ or $-q(2) \leq q(1) \leq 0$ holds.
\end{description}
Hence as long as $\lambda > 2$, $q(2)^2 \geq q(1)^2$. 

Now fix a LEC order $r$. If $\lambda_r >2$, then 
\begin{equation}\nonumber
	c^{LEC(r)}_2 = \frac{1}{n} + \sum_{k=1}^r q_k(2)^2 \geq \frac{1}{n} + \sum_{k=1}^r q_k(1)^2 =c^{LEC(r)}_1.
\end{equation}
If $\lambda_r \leq 2$, then 
\begin{equation}\nonumber
	c^{LEC(r)}_2 = 1-\sum_{k=r+1}^{n-1} q_k(2)^2 \geq  1-\sum_{k=r+1}^{n-1} q_k(1)^2  =c^{LEC(r)}_1.
\end{equation}
Thus, for any LEC order $r$, the LEC score for node 2 is greater than or equal to that for node 1.
\end{proof}
\medskip

\begin{proof}[Proof of Property \ref{prop:isolated}]
Suppose that $d_j=0$ for $j \in \{1,2,\dots, k\}$. Let $\bm{e}_i \in \mathbb{R}^n$ denote the vector such that only $i$-th entry is 1 and 0 otherwise. Then, for any $j \in \{1,2, \dots, k\}$, we have $L \bm{e}_i = 0\bm{e}_i$. Since $\{\bm{e}_1, \bm{e}_2, \dots, \bm{e}_k, \bm{1}_n \}$ are linearly independent, the Laplacian eigenvalues satisfy $\lambda_{n-j}=0$ for $j \in \{1,2,\dots, k\}$. Let $S \subset \mathbb{R}^n$ be the subspace spanned by $\{\bm{e}_1, \bm{e}_2, \dots, \bm{e}_k, \bm{1}_n \}$. Then, for any $\bm{q} \in \mathbb{R}^n \setminus M$, we have $q(j) = 0$ for any $j \in \{1,2, \dots, k\}$. This implies that for any LEC order $r \leq n-k-1$, $c^{LEC(r)}_j = q_0(j)^2 = \frac{1}{n}$.
\end{proof}
\medskip

\begin{proof}[Proof of Property \ref{prop:core}]
Suppose that for $i = 1,2, \dots, k$, the degree is $d_i=n-1$ and for $i=k+1, k+2, \dots, n$, the degree is $d_i = k$.

Let $\bm{v}_i \equiv n \bm{e}_i -\bm{1}_n$ where $\bm{e}_i$ is the vector such that only $i$-th entry is 1 and 0 otherwise. Then, we have $L\bm{v}_i = n \bm{v}_i$.
Also, for any $\bm{y}_{n-k} \in \mathbb{R}^{n-k}$ orthogonal to $\bm{1}_{n-k}$, we have 
\begin{equation}\nonumber
	L \begin{bmatrix}
\bm{0}_k
\\
\bm{y}_{n-1}
\end{bmatrix}
= k
\begin{bmatrix}
\bm{0}_k
\\
\bm{y}_{n-k}
\end{bmatrix}.
\end{equation}
Thus, there are three types of eigenvectors $\bm{1}_n$, $\bm{v}_i$ and $(\bm{0}_k',\bm{y}_{n-1}')'$ with corresponding eigenvalues $0$, $n$ and $k$, respectively.

Thus, the LEC score for the hubs is given by 
\begin{equation}\nonumber
	c^{LEC(r)}_{hub} = \begin{cases}
\frac{1}{n} + \frac{(n-1)r}{nk}&\text{for }r \in\{ 0, 1, \dots, k\}
\\
1 &\text{for }r \in\{ k+1, \dots, n-1\}
\end{cases}
\end{equation}

The LEC score for the peripheries is 
\begin{equation}\nonumber
	c^{LEC(r)}_{per} = \begin{cases}
\frac{1}{n} + \frac{r}{n(n-k)}&\text{for }r \in\{ 0, 1, \dots, k\}
\\
\frac{1}{n}+\frac{k}{n(n-k)}+\frac{r-k}{n-k} &\text{for }r \in \{ k+1, \dots, n-1\}
\end{cases}
\end{equation}
This completes the proof.
\end{proof}
\medskip

\begin{proof}[Proof of Proposition \ref{prop_targeting}]
Let $Q$ be the matrix of eigenvectors of $L$, and let $\Omega$ be a diagonal matrix with entries $\omega_j =  \left( \frac{1}{1+\beta \lambda_j} \right)^2$, the eigenvalues of $(I_n +\beta L)^{-2}$. Then, the social loss can be expressed as:
\begin{equation}
\begin{aligned}\nonumber
\bm{a}' \bm{a} =&~ (\bm{1}_n - \bm{e}_i)' (I_n +\beta L)^{-2} (\bm{1}_n - \bm{e}_i)
\\
=&~ n-2+ \bm{e}_i' Q \Omega Q' \bm{e}_i
\\
=&~ n-2+ \mathrm{tr}(\Omega Q' \bm{e}_i\bm{e}_i' Q)
\end{aligned}
\end{equation}

Next, compute $Q'\bm{e}_i$:
\begin{equation}
\begin{aligned}\nonumber
Q' \bm{e}_i =&~ 
\begin{bmatrix}
	\bm{q}_0' \\ 
	\bm{q}_1' \\ 
	\vdots \\
	\bm{q}_{n-1}' \\ 
\end{bmatrix} \bm{e}_i
= 
\begin{bmatrix}
q_0(i) \\ q_1(i) \\ \vdots \\ q_{n-1}(i)
\end{bmatrix}.
\end{aligned}
\end{equation}
Hence, 
\begin{equation}\nonumber
\mathrm{tr}(\Omega Q' \bm{e}_i\bm{e}_i' Q) = \sum_{j=0}^{n-1} \omega_j [q_j(i)]^2.
\end{equation}
Finally, using the orthonormality property $\sum_{j=0}^{n-1} [q_j(i)]^2=1$, we obtain the formulation presented in \eqref{eq:target_problem}.
\end{proof}

\renewcommand{\thefigure}{B\arabic{figure}}
\setcounter{figure}{0}
\renewcommand{\thetable}{B\arabic{table}}
\setcounter{table}{0}
\setcounter{section}{1}

\section{Supplementary materials for microfinance application}

\subsection{Descriptive and comparative analyses}
\subsubsection{Centrality measures and normalization}\label{sec:a_variable}

This subsection describes the centrality measures analyzed in the empirical application and outlines the normalization procedures applied to ensure comparability across villages.

In the main section, \textit{pLEC} is computed using the LEC when the LEC order is set to 20\% of the network size, denoted as \textit{plec\_ns20pct} in the Appendix. This choice is justified by the discussion in Section \ref{sec:statistical} and the distribution of Laplacian eigenvalues shown in Figure \ref{fig:lambda_all_BSS}. For robustness, an alternative measure, \textit{plec\_cum50pct}, which sets the LEC order based on the cumulative sum of Laplacian eigenvalues reaching 50\%, is also computed and analyzed in Appendix \ref{sec:a_robustness}.

The eigenvector centrality, \textit{c\_eig}, is defined as the eigenvector of the adjacency matrix corresponding to the largest eigenvalue, including households with zero degrees.%
\footnote{In \cite{banerjee2013diffusion}, the leader averages of centrality scores are computed after dropping households with zero degrees from the dataset. This approach can distort the leader average, as a leader with one neighbor who becomes zero degree would increase the average centrality score. To avoid such distortions, zero-degree leaders are included when computing leader averages.}  
However, the leader averages of \textit{c\_eig} exhibit a strong negative correlation with network size (correlation coefficient of -0.724). To address this issue, a normalization procedure is applied to derive \textit{c\_eig\_normalized}, ensuring that the mean eigenvector centrality within each village remains constant and independent of network size (see Appendix \ref{sec:a_eig_normalization} for further details).

\textit{c\_bona\_alpha} (or \textit{Katz} in the main section) corresponds to Katz-Bonacich (alpha) centrality, while \textit{c\_bona\_power} represents Bonacich power centrality (also referred to as beta centrality). For both measures, the decay factors are set to $0.8 / \mu$, where $\mu$ is the largest eigenvalue of the adjacency matrix. Following the practice in \cite{banerjee2013diffusion}, this adjustment ensures that the distribution of centrality scores is independent of network size and density across villages.

The diffusion centralities, denoted as \textit{c\_diffu\_}, are calculated for varying numbers of iterations. \textit{c\_diffu\_t} uses the duration of the survey (\textit{len\_t}) as the number of iterations, as in \cite{banerjee2013diffusion}. Additionally, \textit{c\_diffu\_03}, \textit{c\_diffu\_07}, and \textit{c\_diffu\_10} are computed with fixed iterations of $T = 3$, $T = 7$, and $T = 10$, respectively.

\subsubsection{Detailed Summary Statistics of Village Networks}\label{sec:a_summarystats}

\begin{table}[h]

\caption{\label{tab:stats_BSS}Summary statistics: Village level variables (BSS)}
\centering
\begin{tabular}[t]{lrrrrr}
\toprule
variable & n & mean & sd & min & max\\
\midrule
num\_hh & 43 & 223.209 & 56.170 & 114.000 & 356.000\\
avg\_deg & 43 & 9.196 & 1.653 & 6.128 & 12.781\\
fractionLeaders & 43 & 0.121 & 0.031 & 0.064 & 0.193\\
len\_t & 43 & 6.558 & 1.830 & 2.000 & 10.000\\
\addlinespace
mf\_leader & 43 & 0.248 & 0.125 & 0.036 & 0.556\\
mf\_nonleader & 43 & 0.185 & 0.084 & 0.068 & 0.438\\
degree\_leader & 43 & 12.729 & 2.559 & 8.538 & 18.818\\
degree\_nonleader & 43 & 8.715 & 1.603 & 5.576 & 12.204\\
\addlinespace
plec\_ns20pct & 43 & 0.342 & 0.064 & 0.215 & 0.505\\
plec\_cum50pct & 43 & 0.359 & 0.064 & 0.245 & 0.535\\
c\_eig & 43 & 0.073 & 0.017 & 0.043 & 0.123\\
c\_eig\_normalized & 43 & 1.441 & 0.244 & 0.994 & 2.216\\
c\_bona\_alpha & 43 & 4.292 & 0.448 & 3.311 & 5.191\\
c\_bona\_power & 43 & 1.099 & 0.141 & 0.804 & 1.501\\
\addlinespace
c\_diffu\_t & 43 & 5.410 & 1.777 & 2.064 & 10.503\\
c\_diffu\_03 & 43 & 2.512 & 0.294 & 1.931 & 3.111\\
c\_diffu\_07 & 43 & 5.759 & 0.791 & 3.996 & 7.344\\
c\_diffu\_10 & 43 & 8.166 & 1.192 & 5.458 & 10.534\\
\addlinespace
savings & 43 & 1.613 & 0.098 & 1.354 & 1.837\\
shgparticipate & 43 & 0.207 & 0.084 & 0.014 & 0.354\\
fracGM\_survey & 43 & 2.509 & 0.373 & 1.068 & 3.030\\
\bottomrule
\multicolumn{6}{l}{\textsuperscript{} \footnotesize{\textit{Notes}: The statistics displayed do not incorporate the bias correction}}\\
\multicolumn{6}{l}{\footnotesize{proposed by \cite{chandrasekhar2011econometrics}.}}\\
\end{tabular}
\end{table}

Table \ref{tab:stats_BSS} presents the descriptive statistics of the village-level variables from the 43 BSS sample villages used in the regression analysis. The number of households ranges from 114 to 356, with an average of 223.2. The average degree across villages varies between 6.1 and 12.8. On average, 12.1\% of households have members designated as ``leaders.''%
\footnote{The fraction of leaders shows a weak negative correlation with the number of households, with a correlation coefficient of -0.25.}  
The variable \textit{len\_t}, representing the survey duration for each village, ranges from 2 to 10 periods, with a mean of 6.56.%
\footnote{In Appendix \ref{sec:a_centr_diffu}, we explore the effects of survey duration on participation rates and assess the validity of the diffusion centrality measure used in \cite{banerjee2013diffusion}.}  
The average participation rates in microfinance (\textit{mf}) are 24.8\% for leaders and 18.5\% for non-leaders. Leaders also have a higher average degree of 12.7, compared to 8.7 for non-leaders.  

In the regression analysis, we investigate how leaders' centrality affects non-leaders' participation in microfinance. The centrality measures listed in Table \ref{tab:stats_BSS} include pLEC, eigenvector centrality, Bonacich centrality, and diffusion centrality.

\subsubsection{Distributions of Laplacian Eigenvalues}

\begin{figure}[h]\centering
\includegraphics[width=0.65\textwidth]{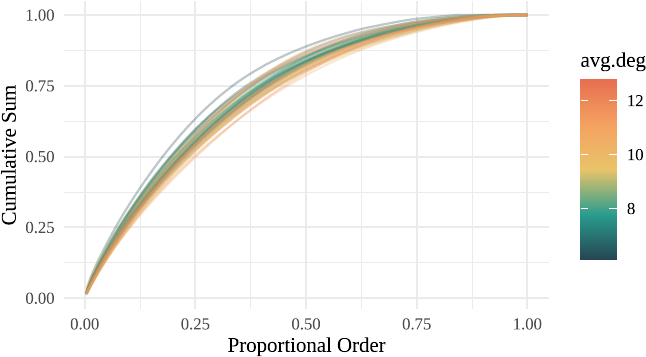}
\caption{Cumulative sum of Laplacian eigenvalues (BSS).}\label{fig:lambda_all_BSS}
\end{figure}

Figure \ref{fig:lambda_all_BSS} illustrates the cumulative sum of Laplacian eigenvalues normalized by their total sum for each village, with color gradients indicating the average degree of each village network, as introduced in Section \ref{sec:statistical}.%
\footnote{The x-axis represents the proportion of the network size used to determine the LEC order, while the y-axis reflects the cumulative proportion of eigenvalues captured.}  
The variation across villages, as represented by the colored curves, suggests that the distribution of Laplacian eigenvalues is broadly consistent, regardless of differences in network size and network density (proxied by average degree).  

Notably, selecting an LEC order equivalent to 20\% of the network size captures approximately 50\% of the total sum of Laplacian eigenvalues across the sample. This result supports the use of proportional LEC (pLEC) orders, as it strikes a balance between generating sufficient variation in LEC scores among households within each village and reflecting the core structural features of the networks.

\subsubsection{Correlation between centrality measures}

\begin{figure}[h]
    \centering
    \includegraphics[width=0.9\linewidth]{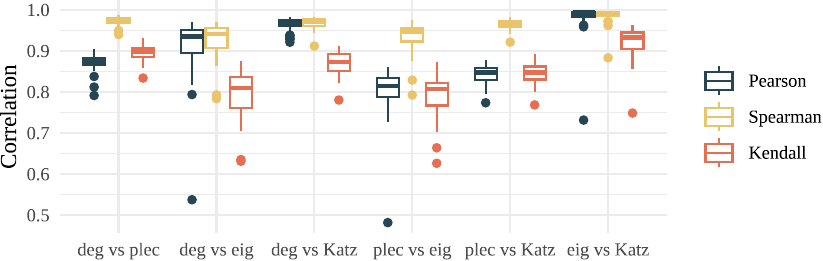} 
    \caption{Box plot of pairwise correlation between centrality measures (BSS).}
    \label{fig:correlation_boxplot_BSS}
\end{figure}

Figure~\ref{fig:correlation_boxplot_BSS} presents the box plot of pairwise correlation coefficients (Pearson, Spearman, and Kendall) between centrality measures across the BSS sample villages. The figure illustrates six pairwise comparisons: three involve degree centrality with other measures (eigenvector, Katz-Bonacich, and pLEC), two involve pLEC with other measures (eigenvector and Katz-Bonacich), and the rightmost boxplots represent the comparison between eigenvector and Katz-Bonacich centrality. Each boxplot reflects the range and variability of correlation coefficients across the BSS villages for the respective pair of centrality measures.

Focusing on the first three sets of boxplots, we observe distinct patterns in the correlations involving degree centrality. The correlations between degree and Katz-Bonacich centrality are consistently high across all villages, with Pearson coefficients around 0.95 and limited variability. However, Kendall's rank correlations are slightly lower and exhibit greater variability. In contrast, the correlations between degree and eigenvector centrality show greater variability, with outliers and a noticeable gap between Pearson and Kendall coefficients, highlighting the localization effect of eigenvector centrality. For degree and pLEC centralities, Kendall's rank correlations stand out, averaging around 0.90 with a narrow range, reflecting pLEC's tendency to align with degree rankings while capturing additional structural features.

These observations are broadly consistent with the patterns observed in the boxplot generated from the Erd\H{o}s--R\'enyi (ER) model (Figure~\ref{fig:boxplot_correlation}). However, notable differences emerge. In the BSS sample, the correlations between degree and pLEC are systematically lower compared to those observed in the ER model, suggesting that the structural heterogeneity and clustering in real-world networks reduce the alignment between pLEC and degree centrality.

Turning to the correlations between pLEC and eigenvector or Katz-Bonacich centrality, distinct patterns emerge. A shared feature of these comparisons is the small gap between Pearson and Kendall coefficients, indicating that pLEC rankings are reasonably well-aligned with those of eigenvector and Katz-Bonacich centrality. However, the correlations between pLEC and eigenvector centrality exhibit a lower average and wider variability, with coefficients typically ranging from 0.70 to 0.85 and occasional downward outliers, reflecting inconsistent alignment across villages. By contrast, the correlations between pLEC and Katz-Bonacich centrality are higher and more stable, concentrating around 0.80 to 0.90, which suggests a stronger and more consistent relationship.

The rightmost boxplots display the correlations between eigenvector and Katz-Bonacich centrality. Pearson and Spearman coefficients are consistently high, generally above 0.95, with minimal variability across villages. Kendall's rank correlations, while slightly lower, remain above 0.90, reflecting strong alignment in node rankings between the two measures. These results suggest that eigenvector and Katz-Bonacich centrality capture highly similar structural information.

\subsection{Robustness and technical discussions}
\label{sec:a_robustness}

The robustness checks in the appendix validate and extend the findings from the main analysis. Alternative specifications of proportional Laplacian Eigenvector Centrality (pLEC), including variations based on cumulative eigenvalue proportions, yield consistent results, reaffirming pLEC's unique role in capturing network bottlenecks. Similarly, incorporating classical centrality measures, such as Bonacich power and Katz-Bonacich alpha centralities, alongside pLEC, underscores their complementary explanatory power in predicting non-leader participation.

\subsubsection{Alternative pLEC measure}\label{sec:a_plec_cumsum50pct}

\begin{table}[h]
\centering
\caption{Eigenvector centrality with pLEC based on the cumulative sum of eigenvalues.}\label{tab:reg_plec_cum50pct}
\centering
\begin{tabular}[t]{lcccc}
\toprule
  & (1) & (2) & (3) & (4)\\
\midrule
log10(num\_hh) & \num{-0.308}*** & \num{-0.326}*** & \num{-0.320}*** & \num{-0.352}***\\
 & (\num{0.081}) & (\num{0.081}) & (\num{0.074}) & (\num{0.080})\\
len\_t & \num{0.017}* & \num{0.018}* & \num{0.014}+ & \\
 & (\num{0.006}) & (\num{0.007}) & (\num{0.007}) & \\
c\_eig\_normalized & \num{0.084}* &  & \num{0.116}*** & \num{0.144}***\\
 & (\num{0.034}) &  & (\num{0.031}) & (\num{0.029})\\
plec\_cum50pct &  & \num{-0.171} & \num{-0.328}* & \num{-0.429}**\\
 &  & (\num{0.127}) & (\num{0.121}) & (\num{0.122})\\
Control & yes & yes & yes & yes\\
\midrule
Num.Obs. & \num{43} & \num{43} & \num{43} & \num{43}\\
R2 & \num{0.476} & \num{0.439} & \num{0.526} & \num{0.458}\\
R2 Adj. & \num{0.371} & \num{0.327} & \num{0.415} & \num{0.350}\\
\bottomrule
\multicolumn{5}{l}{\rule{0pt}{1em}+ p $<$ 0.1, * p $<$ 0.05, ** p $<$ 0.01, *** p $<$ 0.001}\\
\end{tabular}
\end{table}

Table \ref{tab:reg_plec_cum50pct} examines the robustness of the results to an alternative specification of the proportional Laplacian Eigenvector Centrality (pLEC). Instead of setting the pLEC order as 20\% of the network size, this alternative measure determines the LEC order based on the cumulative proportion of Laplacian eigenvalues, selecting the smallest order such that the cumulative sum of eigenvalues reaches 50\% of the total sum. This approach adjusts the pLEC calculation to account for the spectral properties of individual village networks.

The regression results in Table \ref{tab:reg_plec_cum50pct} maintain the same specification as the main regressions in the text. The coefficients for eigenvector centrality (\textit{c\_eig\_normalized}) remain positive and statistically significant, reinforcing the finding that leaders' influence, as captured by eigenvector centrality, promotes non-leader participation. Conversely, the alternative pLEC measure shows a negative and significant relationship with non-leader participation, similar to the primary pLEC specification. This consistency across pLEC measures highlights the robustness of the observed contrasting roles of eigenvector centrality and pLEC.

\subsubsection{Robustness across alternative measures}\label{sec:a_diffusion}

\begin{table}[h]
\centering
\caption{Alternative centrality measures with pLEC.}\label{tab:alt_centrality}
\centering
\begin{tabular}[t]{lcccccc}
\toprule
  & (1) & (2) & (3) & (4) & (5) & (6)\\
\midrule
log10(num\_hh) & \num{-0.294}*** & \num{-0.297}*** & \num{-0.294}*** & \num{-0.269}** & \num{-0.292}*** & \num{-0.270}**\\
 & (\num{0.080}) & (\num{0.081}) & (\num{0.080}) & (\num{0.078}) & (\num{0.072}) & (\num{0.078})\\
len\_t & \num{0.019}** & \num{0.018}** & \num{0.019}** & \num{0.018}* & \num{0.015}* & \num{0.018}*\\
 & (\num{0.007}) & (\num{0.006}) & (\num{0.007}) & (\num{0.007}) & (\num{0.007}) & (\num{0.007})\\
c\_bona\_alpha & \num{0.024} &  &  & \num{0.067}* &  & \\
 & (\num{0.023}) &  &  & (\num{0.029}) &  & \\
c\_bona\_power &  & \num{0.123}+ &  &  & \num{0.244}*** & \\
 &  & (\num{0.066}) &  &  & (\num{0.064}) & \\
c\_diffu\_07 &  &  & \num{0.014} &  &  & \num{0.038}*\\
 &  &  & (\num{0.013}) &  &  & (\num{0.016})\\
plec\_ns20pct &  &  &  & \num{-0.431}* & \num{-0.450}** & \num{-0.426}*\\
 &  &  &  & (\num{0.181}) & (\num{0.146}) & (\num{0.179})\\
Control & yes & yes & yes & yes & yes & yes\\
\midrule
Num.Obs. & \num{43} & \num{43} & \num{43} & \num{43} & \num{43} & \num{43}\\
R2 & \num{0.438} & \num{0.461} & \num{0.438} & \num{0.493} & \num{0.531} & \num{0.493}\\
R2 Adj. & \num{0.325} & \num{0.354} & \num{0.326} & \num{0.373} & \num{0.421} & \num{0.373}\\
\bottomrule
\multicolumn{7}{l}{\rule{0pt}{1em}+ p $<$ 0.1, * p $<$ 0.05, ** p $<$ 0.01, *** p $<$ 0.001}\\
\end{tabular}
\end{table}

Table \ref{tab:alt_centrality} examines the robustness of the relationship between leaders' network centrality and non-leaders' participation rates, incorporating Katz-Bonacich alpha centrality, Bonacich power centrality, and diffusion centrality with fixed iterations (\textit{c\_diffu\_07}) alongside proportional LEC (\textit{plec\_ns20pct}).

The alternative centrality measures, including Katz-Bonacich alpha, Bonacich power, and diffusion centrality, display positive and significant effects on participation, consistent with the earlier results using eigenvector centrality. Across all specifications, \textit{plec\_ns20pct} consistently shows a negative and significant association with non-leaders' participation rates (Columns 4 and 6), highlighting its role in capturing structural constraints or bottlenecks.

The adjusted \( R^2 \) values are similar across specifications, with Bonacich power centrality achieving the highest adjusted \( R^2 \) of 0.421 in Column (5). These results confirm that pLEC and alternative centrality measures capture distinct yet complementary aspects of leaders' influence on participation, reinforcing the robustness of the main findings.


\subsubsection{Assessing the validity of diffusion centrality}\label{sec:a_centr_diffu}

This section evaluates the validity of diffusion and communication centralities introduced by \cite{banerjee2013diffusion} by examining their relationship with survey duration (\textit{len\_t}) and their robustness under alternative specifications. The findings highlight potential limitations in relying on centrality measures that are strongly correlated with non-structural variables like survey duration. 

\paragraph{Correlation with survey duration}~

\begin{figure}[h]
\centering
\subfloat[len\_t vs.~diffusion centrality.]{\includegraphics[width=0.475\linewidth]{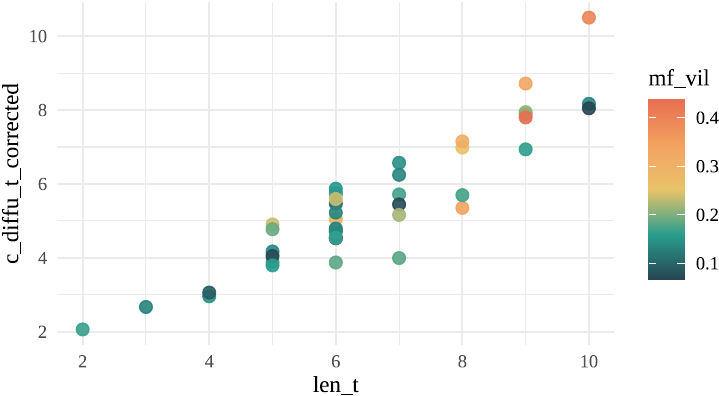}\label{fig:plot_diffu}}\hspace{7mm}
\subfloat[len\_t vs.~communication centrality.]{\includegraphics[width=0.475\linewidth]{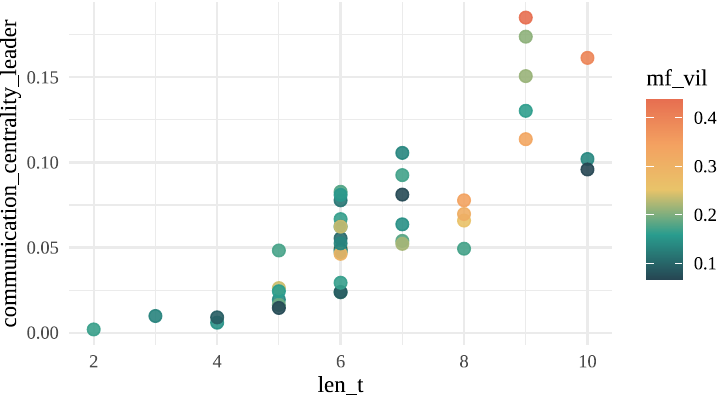}\label{fig:plot_commu}}
\caption{Survey duration and diffusion/communication centralities.}\label{fig:plot_lent}
\end{figure}%

Figures \ref{fig:plot_diffu} and \ref{fig:plot_commu} depict the relationship between survey duration (\textit{len\_t}) and the centrality measures proposed in \cite{banerjee2013diffusion}: diffusion centrality (\textit{c\_diffu\_t\_corrected}) and communication centrality (\textit{communication\_centrality\_leader}). The figures reveal a strong positive correlation between survey duration and these centralities, consistent with the iterative construction of \textit{c\_diffu\_t\_corrected}, which depends directly on the number of diffusion iterations set equal to \textit{len\_t}. This suggests that communication centrality may also indirectly reflect survey duration rather than purely structural aspects of the network. Furthermore, the color gradient, representing non-leaders' microfinance participation rates (\textit{mf\_vil}), indicates a positive association between participation and survey duration, complicating the interpretation of these centralities.

\paragraph{Replication of core findings}~

\begin{table}[h]
\centering
\caption{Replication of Fig.2 in \cite{banerjee2013diffusion}.}\label{tab:replicate}
\centering
\begin{tabular}[t]{lccccc}
\toprule
  & (1) & (2) & (3) & (4) & (5)\\
\midrule
num\_hh & \num{-0.001}** & \num{-0.001}** & \num{0.000} & \num{0.000} & \num{0.000}+\\
 & (\num{0.000}) & (\num{0.000}) & (\num{0.001}) & (\num{0.001}) & (\num{0.000})\\
communication\_centrality\_leader & \num{0.766}* &  & \num{0.713} &  & \\
 & (\num{0.335}) &  & (\num{0.428}) &  & \\
c\_diffu\_t\_corrected &  & \num{0.022}** &  & \num{0.018}+ & \num{0.020}**\\
 &  & (\num{0.007}) &  & (\num{0.009}) & (\num{0.007})\\
c\_eig\_corrected &  &  & \num{3.572} & \num{3.691} & \num{0.612}\\
 &  &  & (\num{2.330}) & (\num{2.265}) & (\num{0.854})\\
Other Centralities & no & no & yes & yes & no\\
Control & yes & yes & yes & yes & yes\\
\midrule\\
Num.Obs. & \num{43} & \num{43} & \num{43} & \num{43} & \num{43}\\
R2 & \num{0.406} & \num{0.442} & \num{0.530} & \num{0.515} & \num{0.448}\\
R2 Adj. & \num{0.307} & \num{0.349} & \num{0.342} & \num{0.321} & \num{0.338}\\
\bottomrule
\multicolumn{6}{l}{\rule{0pt}{1em}+ p $<$ 0.1, * p $<$ 0.05, ** p $<$ 0.01, *** p $<$ 0.001}\\
\end{tabular}
\end{table}

Table \ref{tab:replicate} replicates the main results from \cite{banerjee2013diffusion}, using diffusion and communication centralities as predictors. The reported adjusted $R^2$ values range from 0.307 to 0.349, which are substantially lower than those obtained in our main analysis (e.g., Table \ref{tab:regression_plec} and Table \ref{tab:alt_centrality}). This discrepancy highlights the limited explanatory power of these centrality measures relative to alternatives such as eigenvector and proportional Laplacian Eigenvector Centralities (pLEC).

\paragraph{Incorporating survey duration}~

\begin{table}[h]
\centering
\caption{Fig.2 in \cite{banerjee2013diffusion} with survey duration.}\label{tab:replicate_with_len_t}
\centering
\begin{tabular}[t]{lccccc}
\toprule
  & (1) & (2) & (3) & (4) & (5)\\
\midrule
num\_hh & \num{-0.001}** & \num{-0.001}** & \num{0.000} & \num{-0.001} & \num{0.000}\\
 & (\num{0.000}) & (\num{0.000}) & (\num{0.001}) & (\num{0.001}) & (\num{0.000})\\
len\_t & \num{0.011} & \num{-0.001} & \num{-0.003} & \num{-0.030} & \num{0.024}\\
 & (\num{0.013}) & (\num{0.015}) & (\num{0.022}) & (\num{0.040}) & (\num{0.025})\\
communication\_centrality\_leader & \num{0.393} &  & \num{0.827} &  & \\
 & (\num{0.674}) &  & (\num{1.159}) &  & \\
c\_diffu\_t\_corrected &  & \num{0.023} &  & \num{0.051} & \num{-0.007}\\
 &  & (\num{0.016}) &  & (\num{0.046}) & (\num{0.029})\\
c\_eig\_corrected &  &  & \num{3.599} & \num{3.889} & \num{1.906}\\
 &  &  & (\num{2.359}) & (\num{2.364}) & (\num{1.346})\\
Other Centralities & no & no & yes & yes & no\\
Control & yes & yes & yes & yes & yes\\
\midrule\\
Num.Obs. & \num{43} & \num{43} & \num{43} & \num{43} & \num{43}\\
R2 & \num{0.421} & \num{0.442} & \num{0.531} & \num{0.522} & \num{0.459}\\
R2 Adj. & \num{0.306} & \num{0.330} & \num{0.320} & \num{0.307} & \num{0.332}\\
\bottomrule
\multicolumn{6}{l}{\rule{0pt}{1em}+ p $<$ 0.1, * p $<$ 0.05, ** p $<$ 0.01, *** p $<$ 0.001}\\
\end{tabular}
\end{table}

Table \ref{tab:replicate_with_len_t} introduces survey duration as a control variable to further evaluate the validity of diffusion and communication centralities. When \textit{len\_t} is included in the regression model, the coefficients for these centrality measures become highly sensitive to additional controls and lose statistical significance. This sensitivity underscores the potential confounding role of survey duration, as centralities derived from iterative diffusion processes may capture time-dependent effects rather than intrinsic structural features of the network.%
\footnote{Log-transforming the household count variable improves the model fit, as evidenced by higher adjusted $R^2$ values, without altering the qualitative findings. Detailed results are available upon request.}


\subsubsection{Eigenvector centrality with and without normalization}\label{sec:a_eig_normalization}

This subsection examines the implications of normalization and correction applied to eigenvector centrality measures, focusing on their impact on cross-network comparability and robustness in regression models.

\paragraph{Centrality scores and size effects}~

\begin{figure}[h]
\centering
\includegraphics[height = 0.45\linewidth]{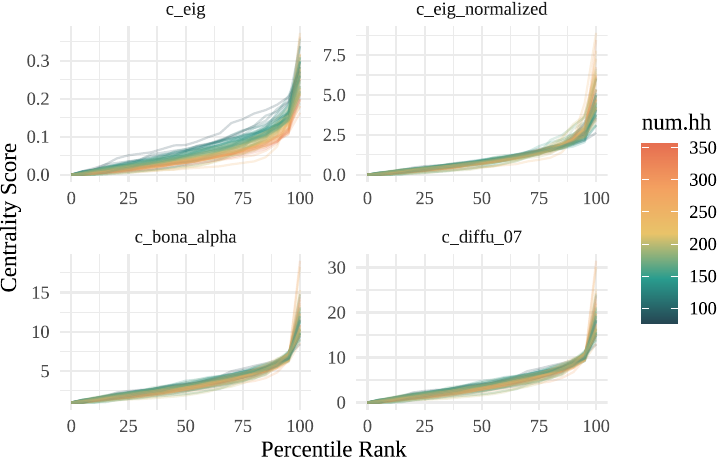}
\caption{Distribution of centrality scores: Eigenvector centrality}\label{fig:dist_centr_eig}
\end{figure}

Figure \ref{fig:dist_centr_eig} displays the distributions of four eigenvector-based measures: unadjusted eigenvector centrality (\textit{c\_eig}), normalized eigenvector centrality (\textit{c\_eig\_normalized}), Katz-Bonacich alpha centrality (\textit{c\_bona\_alpha}), and diffusion centrality (\textit{c\_diffu\_07}). While \textit{c\_eig} exhibits substantial variation across networks, particularly for nodes with low scores, the normalized version (\textit{c\_eig\_normalized}) shows a smoother distribution across percentiles, with reduced dependence on network size. 

\paragraph{Regression with and without normalization}~

\begin{table}[h]
\centering
\caption{Eigenvector centrality (with and without normalization) with pLEC}\label{tab:reg_eig}
\centering
\begin{tabular}[t]{lcccccc}
\toprule
  & (1) & (2) & (3) & (4) & (5) & (6)\\
\midrule
log10(num\_hh) & \num{-0.183}+ & \num{-0.159} & \num{-0.308}*** & \num{-0.086} & \num{-0.071} & \num{-0.315}***\\
 & (\num{0.103}) & (\num{0.110}) & (\num{0.081}) & (\num{0.102}) & (\num{0.117}) & (\num{0.074})\\
len\_t & \num{0.018}** & \num{0.018}** & \num{0.017}* & \num{0.016}* & \num{0.016}* & \num{0.014}*\\
 & (\num{0.006}) & (\num{0.006}) & (\num{0.006}) & (\num{0.007}) & (\num{0.007}) & (\num{0.007})\\
c\_eig & \num{1.311}+ &  &  & \num{2.379}** &  & \\
 & (\num{0.740}) &  &  & (\num{0.754}) &  & \\
c\_eig\_corrected &  & \num{1.461}+ &  &  & \num{2.389}** & \\
 &  & (\num{0.785}) &  &  & (\num{0.819}) & \\
c\_eig\_normalized &  &  & \num{0.084}* &  &  & \num{0.126}***\\
 &  &  & (\num{0.034}) &  &  & (\num{0.032})\\
plec\_ns20pct &  &  &  & \num{-0.389}** & \num{-0.351}* & \num{-0.353}**\\
 &  &  &  & (\num{0.140}) & (\num{0.135}) & (\num{0.123})\\
Control & yes & yes & yes & yes & yes & yes\\
\midrule\\
Num.Obs. & \num{43} & \num{43} & \num{43} & \num{43} & \num{43} & \num{43}\\
R2 & \num{0.458} & \num{0.461} & \num{0.476} & \num{0.514} & \num{0.510} & \num{0.528}\\
R2 Adj. & \num{0.350} & \num{0.353} & \num{0.371} & \num{0.400} & \num{0.395} & \num{0.416}\\
\bottomrule
\multicolumn{7}{l}{\rule{0pt}{1em}+ p $<$ 0.1, * p $<$ 0.05, ** p $<$ 0.01, *** p $<$ 0.001}\\
\end{tabular}
\end{table}

Table \ref{tab:reg_eig} evaluates the effects of these adjustments on regression estimates. Columns (1) and (4) utilize the unadjusted eigenvector centrality (\textit{c\_eig}), Columns (2) and (5) use the corrected version (\textit{c\_eig\_corrected}), where zero-degree nodes are excluded, and Columns (3) and (6) employ the normalized measure (\textit{c\_eig\_normalized}), ensuring independence from network size.

The results reveal significant distortions when using unadjusted measures. In Columns (1) and (4), the negative correlation between \textit{c\_eig} and network size diminishes the magnitude and statistical significance of the \textit{log10(num\_hh)} coefficients. Similarly, \textit{c\_eig\_corrected} in Columns (2) and (5) yields comparable biases, indicating the sensitivity of both measures to network size (See also Table \ref{tab:alt_centrality}).

By contrast, the normalized eigenvector centrality (\textit{c\_eig\_normalized}) in Columns (3) and (6) produces stable and statistically significant coefficients for both \textit{c\_eig\_normalized} and \textit{log10(num\_hh)}. The consistency between these estimates demonstrates the robustness of the normalized centrality measure, making it suitable for cross-network comparisons. Additionally, models employing \textit{c\_eig\_normalized} achieve the highest adjusted $R^2$ values, further emphasizing its value in generating reliable and interpretable results.

These findings highlight the importance of normalization and correction in eigenvector centrality to ensure meaningful analysis across networks of varying sizes and structures.

\end{document}